\newcommand{\Var}{\mbox{\rm Var}}
\newcommand{\expect}{\mathbb{E}}
\def\1{\mathbb{I}}
\def\ind{\mathbbm{1}}
\newcounter{thm}[section]
\newcounter{appen}[section]
\newcounter{assum}[section]
\newtheorem{theor}[thm]{Theorem}
\newtheorem{assumption}[assum]{Assumption}
\newtheorem{lemma}[appen]{Lemma}
\newenvironment{proof}[1][Proof]{\noindent \textbf{#1.}
}{\rule{0.5em}{0.5em}}
\newcommand{\red}{\color{black}}
\newcommand{\green}{\color{black}}
\begin{document}

\title{A {\red likelihood-based} approach for cure regression models}
\author{Kevin Burke\footnote{Corresponding author. University of Limerick, Ireland; kevin.burke@ul.ie}  \hspace{3cm}
Valentin Patilea\footnote{CREST, Ensai, France; patilea@ensai.fr. \newline\newline\indent\indent
Both authors acknowledge the support of the Irish Research Council and the French Ministry of Foreign Affairs through the \emph{Ulysses} scheme. Valentin Patilea acknowledges support from the research program \emph{New Challenges for New Data} of Fondation du Risque/Institut Louis Bachelier and  LCL.} }
\date{\today}

\maketitle

\begin{abstract}

We propose a new {\red likelihood-based} approach for estimation, inference and variable selection for parametric cure regression models in time-to-event analysis under random right-censoring.  In {\red this} context,  it often happens that some subjects are ``cured'', {\red i.e., they will never} experience the event of interest. Then, the sample of censored observations is an unlabeled mixture of cured  and ``susceptible'' subjects.  Using inverse probability censoring weighting (IPCW), we propose {\red a likelihood-based estimation procedure for the cure regression model without making assumptions about the distribution of survival times for} the susceptible subjects. {\red The IPCW approach does require a preliminary estimate of the censoring distribution}, for which general parametric, semi- or non-parametric approaches {\red can} be used. The incorporation of a penalty term in our {\red estimation procedure} is straightforward; {\red in particular,} we propose $\ell_1$-type penalties for variable selection. Our theoretical results  are derived under mild assumptions. Simulation experiments and real data analysis illustrate the effectiveness of the new approach.

\smallskip

{\bf Keywords.} Binary regression; iid representation; Inverse probability censoring weighting; Penalized likelihood.

{\bf MSC 2010 subject classification:}  62N01, 62N02, 62J07
\end{abstract}

\qquad

\newpage

\section{Introduction\label{intro}}

Standard survival models assume that all individuals experience the event of interest eventually (see \citet{kalbprent:2002}). However, this  assumption is not always tenable since, for example, some diseases might only activate when specific biological and/or lifestyle traits are present, or immunity may result from successful curative treatment (or a combination of both treatment and pre-treatment attributes); individuals who will {\red never} experience the event are referred to as \emph{cured} or \emph{non-susceptible}. In practice, the binary ``cure status'', $B \in \{0, 1\}$ (where $B = 1 \Leftrightarrow$ ``{\red the individual is }cured''), typically cannot be measured directly, and, so, survival studies are required.

Let $T \in (0, \infty]$ denote the survival time, and note in particular that, in contrast to standard survival models, the support includes $T=\infty$ corresponding to cured individuals. Thus, $B = \ind(T=\infty) \sim \text{Bernoulli}(\pi)$ where $\pi$ is the cure probability. Furthermore, let $T_0 \in (0, \infty)$, be the ``latency'' time {\red (i.e., the survival time for an uncured individual) with survivor function} $S_{T_0}(t) = \Pr(T_0 > t)$. By construction,  $T=T_0$ when $B=0$, and, since $\Pr(T > t \mid B) = (1-B) S_{T_0}(t) + B$, we have that \mbox{$S_T(t) = \expect\{\Pr(T > t \mid B)\} =$} \mbox{$(1-\pi) S_{T_0}(t) + \pi$} {\red such that $\lim_{t\rightarrow\infty}S_T(t) = \pi$}. Most often, interest centres on modelling the {\red effect of} covariates, $X$, {\red on} $\pi$. Thus, covariates enter via $S_T(t\mid X) = \{1-\pi(X)\} S_{T_0}(t\mid X) + \pi(X)$
where $\pi(X) = \pi(X;\theta) \in (0,1]$ is the cure regression function, and $\theta$ is a vector of parameters; typically, although not necessarily, covariates enter through a linear predictor, i.e., $\pi(X;\theta) = \pi(X^\top\theta)$  where $X = (X_{(0)}\equiv1, X_{(1)}, \ldots, X_{(p)})^\top$ and $\theta = (\theta_{(0)},\theta_{(1)},\ldots,\theta_{(p)})^\top $ respectively. (Here and in the following, for any matrix $A$,   $A^\top$ denotes its transpose.)

Parametric cure models have been considered by various authors \citep{berkgage:1952,farewell:1977,peng:1998}, but the $\theta$ estimates can be sensitive to the choice of latency model \citep{yuetal:2004}. The semi-parametric cure model consists of a parametric cure regression model and a semi-parametric proportional hazards (PH) or accelerated failure time (AFT) latency model \citep{pengdear:2000, syetal:2000, litaylor:2002, zhangpeng:2007}; estimation requires the EM algorithm \citep{dempster:1977} in combination with (modified) partial likelihood \citep{cox:1972,cox:1975} (PH case) or rank regression \citep{ritov:1990,tsiatis:1990} (AFT case).
However, structural model assumptions are 
made in the way that covariates enter {\red the latency component} (i.e., PH or AFT structure), and estimation of the cure component can be sensitive to such choices. (See \citet{amicovankeilgom:2018} for an excellent review of cure literature.)

Our approach differs from the aforementioned. First note that, if $B$ was directly observable, one would simply apply standard binary regression (e.g., logistic or probit) without modelling $S_{T_0}$. We define a ``proxy'' or ``synthetic'' variable which we denote by $B^*$, then replace $B$ by $B^*$ and proceed with classical binary regression. We {\red construct} $B^*$ using inverse probability censoring weighting (IPCW) arguments \citep{robinsfink:2000, van2003unified}. This approach obviates the need for a latency model -- although IPCW does require estimation of the censoring distribution. At first sight, we trade one missing data framework (EM) for another (IPCW), and latency estimation for censoring estimation. However, our novel procedure has key advantages: (i) unlike existing EM approaches, which are not so easily generalized beyond PH and AFT latency models, our framework straightforwardly permits a wide range of censoring models, (ii) the censoring model can be validated using standard model checking techniques since, unlike the latency model, it is directly identifiable from the observed data, and (iii) once $B^*$ has been computed (in a single initial step rather than in an iterative EM fashion), one may avail of standard, fast (penalized) GLM estimation procedures.

The remainder of this article is organized as follows. In Section \ref{sec:prelim}, we introduce a new result which forms the basis of the proposed estimation procedure of Section \ref{sec:est}{\red, including} a penalized version for the purpose of variable selection. Asymptotic theory is provided in Section \ref{sec:ass}, along with empirical evidence via simulation in Section \ref{sec:sim}. A real data example is given in Section \ref{sec:data}, {\red and} we close with some remarks in Section \ref{sec:disc}.
The proofs are postponed to the Appendix (Section \ref{sec:app}).

\section{Preliminaries\label{sec:prelim}}

Before we proceed, we define the censoring time $C \in (0, \infty)$ with survivor function $S_C(t)$, whose support excludes infinity since, practically, observation windows are finite. Furthermore, let $Y = T \wedge C$ be the observed time ($\wedge$ is the minimum operator) and $\Delta = \ind(T \le C) = (1-B)\ind(T_0 \le C)$ be the event indicator. 
We will assume the following:
\begin{align}
T_0 &\perp C  \mid X, \label{ass1} \\[0.2cm]
B &\perp (T_0, C)   \mid X \label{ass2},
\end{align}
where (\ref{ass1}) is the standard independence assumption made throughout survival literature, and (\ref{ass2}) is introduced in the cure context to ensure that the cure regression model is identifiable. In particular, assumptions (\ref{ass1}) and (\ref{ass2}) guarantee that $T \perp C \mid X$ (see the Appendix, Lemma \ref{indep_eq}). With these assumptions in place, it can then be shown that
\begin{equation}
\expect\left(\left.\frac{\Delta}{S_C(Y-\mid X)} ~\right| X\right) ~=~ \expect(1-B\mid X) ~=~ 1 - \pi(X) \label{expectcure}
\end{equation}
where $S_C(t-\mid X) = \Pr(C \ge t \mid X).$ This result
is the core of our estimation scheme which is described in Section \ref{sec:est}.
In fact, (\ref{expectcure}) is a special case of the more general result
\begin{equation}
\expect\left(\left.\frac{\Delta r(Y,X)}{S_C(Y-\mid X)} ~\right| X\right) = \expect\{r(T_0,X)\mid X\} \{1-\pi(X)\} \label{expectcure2}
\end{equation}
but with $r(\cdot,\cdot)=1$; indeed, it is (\ref{expectcure2}) {\red which} is proved in Lemma \ref{core_id}.
It is worth highlighting that (\ref{expectcure2}) is an application of the Inverse-Probability Censoring Weighting (IPCW) approach \citep{robinsfink:2000} extended to the case where a cured proportion exists. 

%

\section{Estimation and inference\label{sec:est}}

If independent and identically distributed (iid) replicates of $(B_i,X_i)$ were observed, we would use the Bernoulli log-likelihood,
\begin{equation}\label{likeB}
\ell(\theta) = \sum_{i=1}^n \left[B_i \log \pi_i (\theta)  + (1-B_i) \log(1-  \pi_i (\theta) )\right],
\end{equation}
 where  $\pi_i (\theta)= \pi(X_i;\theta)$.
Of course, (\ref{likeB}) is not operational {\red in the current context since} $B_i$ is unobserved, but it serves as our motivation for estimation and inference using iid replicates $(Y_i,\Delta_i,X_i)$. 

\subsection{{\red Likelihood-based} estimation}

We now define
\begin{equation}\label{def_Bistar}
B_i^*(S_C) = B_i^*(Y_i, \Delta_i, X_i, S_C) = 1 - \frac{\Delta_i}{S_C(Y_i-\mid X_i)},
\end{equation}
{\red such that} $\expect\{B_i^*(S_C)\mid X_i\} = \expect(B_i\mid X_i)$ {\red follows} from (\ref{expectcure}). Assuming initially that $S_C$ is known, we propose replacing $B_i$ with $B_i^*(S_C)$ in (\ref{likeB}) to obtain
\begin{equation}
\ell^*(\theta) = \sum_{i=1}^n\left[B_i^*(S_C) \log \pi_i (\theta)+ \{1- B_i^*(S_C)\} \log(1-\pi_i (\theta) )\right]   \omega_i.  \label{likedata}
\end{equation}
Unlike $\ell(\theta)$ however, $\ell^*(\theta)$ is formed using the observable quantities $Y_i$ and $\Delta_i$ rather than the unobservable $B_i$. Here, $\omega_i=\omega(X_i)$ are positive weights which are introduced as a technical device when deriving general asymptotic results (but we anticipate that $\omega_i  \equiv 1$ in practically all applications). {\red In the sense that $\ell^*(\theta)$ is based on $\ell(\theta)$, we refer to our procedure as ``likelihood-based''. While the phrase ``pseudo-likelihood'' has been used in similar contexts to ours \citep{Xieliu:2005}, we prefer to avoid this terminology due to its long-standing usage in settings where incorrect models can still yield valid inferences \citep{Gourierouxetal:1984}; our setting is rather different to this in that the model remains unchanged, but the unobservable $B_i$ is replaced with the observable surrogate $B_i^*$.} The score function,
$U^*(\theta)= \partial \ell^*(\theta)  / \partial \theta = \sum_{i=1}^n U^*_i(\theta)$ where
\begin{equation}\label{score_hitomi}
U_i^*(\theta) = \frac{\left[B_i ^*(S_C) - \pi_i (\theta)\right]\omega_i}{\pi_i (\theta)  (1- \pi_i (\theta) )}\,  \frac{\partial  \pi_i (\theta)  }{\partial \theta }\,  \in\mathbb{R}^{p+1},
\end{equation}
 is  unbiased due to property (\ref{expectcure}), {\red and}, in the case of a logistic cure regression model, $  U_i^*(\theta) = [B_i^*(S_C)-  \pi_i (\theta) ] X_i \omega_i  $ with $ \pi_i (\theta) = 1/\{1 + \exp(- \theta^\top X_i)\} $. Furthermore, using standard inequalities, we show in the Appendix (Lemma \ref{lik_ineq_proof}) that, when the cure regression model is identifiable, $\expect\{\ell^*(\theta)\} < \expect\{\ell^*(\theta_0)\}$ $\forall\,\theta \ne \theta_0,$
where $\theta_0$ is the true parameter vector. 

Note that we have explicitly written $B^*_i(S_C)$ as a function of $S_C$ (while its dependence on $Y_i$, $\Delta_i$, and $X_i$ is implicit) since, typically, $S_C$ must be estimated in practice, i.e., we will use $B_i^*(\widehat S_C)$; this is standard in IPCW applications (and recall that (\ref{expectcure}) is based on IPCW arguments). Furthermore, the asymptotic theory of Section \ref{sec:ass} requires only that the estimator for $S_C$ has an iid representation.
Thus, while $S_{T_0}$ is completely unspecified in our proposal, $S_C$ is estimated, and this can essentially be done in an arbitrarily flexible way.
Therefore, for practical purposes, we propose the cure estimator defined as
\begin{equation}\label{mle}
\widehat \theta = (\widehat\theta_{(0)},\widehat\theta_{(1)},\ldots,\widehat\theta_{(p)})^\top = \arg\max_{\theta} \widehat \ell^*(\theta)
\end{equation}
where
\begin{equation}
\widehat \ell^*(\theta) = \sum_{i=1}^n\left[B_i^*(\widehat S_C) \log \pi_i (\theta) + \{1- B_i^*(\widehat S_C)\} \log(1-\pi_i (\theta) )\right]  \omega_i \label{likedata_b}.
\end{equation}
The weights $\omega_i = \omega(X_i)$ serve in theory to control the behavior of general estimates of $B_i^*(S_C)$ in regions of low covariate density. Since (\ref{likedata_b}) is the usual Bernoulli log-likelihood with $B_i$ replaced by $\widehat B_i^* = B_i^*(\widehat S_C)$, standard GLM estimation procedures can be used, and the EM algorithm is avoided. {\red Interestingly, as discussed further in Section \ref{sec:normal}, it appears in our case that replacement of $B_i$ with $\widehat B^*_i$ produces more efficient cure parameter estimates than with $B^*_i$. (Of course, $B_i$ itself would produce the most efficient estimates if it were available.)} Note that the $\widehat B^*_i$ variables play a somewhat similar role to ``synthetic observations'' as used by \citet{kouletal:1981} in a different context (see also \citet{delecroixetal:2008}). 


\subsection{Variable selection\label{sec:varselect}}

With the above in place, by construction, it is also straightforward to include a penalty for the purpose of variable selection. In particular, we consider the alasso (adaptive least absolute shrinkage and selection operator) penalty of \citet{zou:2006}. Although, so far, we have not specified the functional form of $\pi(X;\theta)$, here, we will assume that $\pi(X;\theta) = \pi(X^\top \theta)$, and, thus, $X\in \mathbb{R}^{p+1}$ with $X_{(0)}\equiv 1$.  Then, the alasso estimator is
\begin{equation}\label{alasso_est}
\widehat \theta_\lambda  =(\widehat \theta_{\lambda ,(0)},\widehat \theta_{\lambda ,(1)},\ldots,\widehat \theta_{\lambda ,(p)})^\top = \arg\max_{\theta\in\Theta } \widehat \ell^*_{\lambda}(\theta)
\end{equation}
where
$
\widehat \ell^*_{\lambda}(\theta) = \widehat \ell^*(\theta) -  \lambda \sum_{j=1}^p w_j |\theta_{(j)}| 
$
 with tuning parameter $\lambda \ge 0$ and (potentially adaptive) weights $w_j \ge0$ for $1 \le j \le p$.
Here, as is usual, the intercept, $\theta_{(0)}$, is not penalized in (\ref{alasso_est}), and, furthermore, typically, the covariates {\red are} standardized. Setting $w_j = 1$  $\forall j$ yields the lasso penalty, which penalizes all coefficients equally \citep{tibshirani:1996}, while $w_j = 1/|\widehat\theta_{(j)}^{(0)}|^\gamma$ for some $\gamma >0$ yields the alasso penalty (we will set $\gamma=1$ as is most common in practice). In the latter case,  $\widehat\theta_{(j)}^{(0)}$ may be any consistent estimator of $\theta_{(j)}$, and, typically,  $\widehat\theta_{(j)}^{(0)}=\widehat\theta_{(j)}$, where $\widehat\theta_{(j)}$ is the $j$th unpenalized estimator from (\ref{mle}). Details on implementation aspects of the alasso in our context (optimization procedure and tuning parameter selection) can be found in the Appendix (Section \ref{sec:imp}).

\section{Asymptotic results\label{sec:ass}}

Our asymptotic results are proved under some minimal moment assumptions on the observed variables completed by  some mild high-level assumptions on the cure regression model and on the model for the censoring variable. These conditions are quite natural in the context of right-censored data when  covariates are present and are to be verified on a case by case basis according to the context of the application. In this section we use the  notation $\pi(\theta ) = \pi (X;\theta),$
$\partial \pi(\theta)/ \partial \theta= \partial \pi(X; \theta)/ \partial \theta$,
and $B^*_i(S_C)$ (hence, $B^*_i(\widehat S_C)$) as defined in equation (\ref{def_Bistar}).

\begin{assumption}\label{dgp} \emph{(The data)}
The observations  $(Y_i,\Delta_i,X_i),$  $1\leq i \leq n,$ are independent {\red replicates} of $(Y,\Delta,X)\in\mathbb{R}\times \{0, 1\} \times \mathcal{X}$, where $\mathcal{X}$ is some covariate space.  Moreover, $\mathbb{E}[\Delta/S_C (Y-\mid X)]<\infty.$
\end{assumption}

Let $\mathcal{M}=\{\pi(\theta ): \theta \in\Theta \subset \mathbb{R}^{p+1}\}$ be a generic parametric cure regression model, that is a set of functions of the covariate vector $X$ indexed by $\theta$ in some parameter space $\Theta$, {\red e.g.,} the logistic cure model is $\pi(\theta) = 1/\{1 + \exp(- \theta^\top X)\}.$

\subsection{Consistency}
For the definition of Glivenko-Cantelli function classes, we refer  to \citet{van2000asymptotic}.

\begin{assumption}\label{reg_cure_ass}
\emph{(The cure regression model)}
\begin{enumerate}
\item\label{ass_ome} The  weight $\omega(X)$ is bounded, almost surely nonnegative and has a positive expectation.

\item\label{ass_truth} There exists $\theta_0\in\Theta$ such that $\Pr(T=\infty  \mid X) = \pi(\theta_0)$. Moreover, there exists $0<c <1/2$ such that, for any $\theta\in\Theta$, $\Pr(c\leq  \pi(\theta) \leq 1-c) = 1.$

\item\label{ass_sep} For any $\varepsilon >0,$  $\inf_{\|\theta-\theta_0\| >\varepsilon } \mathbb{E}[|\pi(\theta)- \pi(\theta_0)|\omega (X)] >0$.

\item\label{ass_GC} The model $ \mathcal{M}$ is a $\mathbb{P}_{X}-$Glivenko-Cantelli class of functions of $X$ with constant envelope.
\end{enumerate}

\end{assumption}

\begin{assumption}\label{ulln_ass}
\emph{(Uniform law of large numbers)}
The estimator  $\widehat S_C (\cdot\mid  \cdot)$ satisfies the law of large numbers uniformly over the class of the logit transformations of the functions in $\mathcal M$:
\begin{equation*}
\sup_{\theta \in\Theta} \left|  \frac{1}{n} \sum_{i=1}^{n} \left[ B^*_i(\widehat S_C) - B^*_i( S_C) \right] \omega(X_i) \log\left( \frac{\pi_i(\theta)}{1-\pi_i(\theta)} \right) \right| =o_{\mathbb{P}}(1).
\end{equation*}
\end{assumption}

\quad

For simplicity, we consider a bounded weight $\omega(X)$ and assume that {\red $\pi_i(\theta)$ stays} uniformly away from 0 and 1 {\red $\forall i$}. {\red Note that Assumption \ref{reg_cure_ass}.\ref{ass_sep} provides an indentifiability condition which guarantees} that $\theta_0$ is a well-separated maximum of {\red $\expect\{\ell^*(\theta)\}$.} {\red It is satisfied, for example, by} logistic or probit regression {\red models} when the covariates are not redundant, i.e., when $\mathbb{E}[XX^\top\omega (X)]$ is an invertible matrix.

Let us provide some mild sufficient conditions implying the uniform convergence of Assumption \ref{ulln_ass}.
These sufficient conditions involve a threshold {\red, commonly used in cure literature, which} is typically justified as representing a total follow-up of the study. {\red Usually, this is assumed} to be independent of the covariates, {\red but} we allow it to depend on the covariates in an arbitrary way.

\begin{lemma}\label{suff_cdt_ulln}
Assume that there exists $\tau(x)$ such that, for any $x$, $\Pr(T_0 > \tau(x)) = 0 $ and $\inf_{x \in\mathcal{X}, \;\omega(x)>0} S_C(\tau(x)-\mid x) >0. $ Moreover,  Assumptions {\red\ref{reg_cure_ass}.\ref{ass_truth} and \ref{reg_cure_ass}.\ref{ass_GC}} hold true.  If
\begin{equation}\label{suff_c1}
\sup_{x \in\mathcal{X}, \;\omega(x)>0} \;\sup_{y\leq\tau(x)}\left|\widehat S_C(y-\mid x) - S_C(y-\mid x)  \right| = o_{\mathbb{P}}(1) ,
\end{equation}
then the uniform convergence  in Assumption \ref{ulln_ass} holds true.
\end{lemma}

\quad

The common parametric, semiparametric and nonparametric estimators $\widehat S_C$ satisfy condition (\ref{suff_c1}). Several examples are provided in the monographs by \citet{borgan1993statistical} and \citet{kalbprent:2002}, and, for convenience, some examples are  recalled in the Appendix (Section \ref{sec:unifiid}). The consistency of our cure estimator is stated in the following result.

\begin{theor}\label{conv_th1}
Let Assumptions \ref{dgp}, \ref{reg_cure_ass} and \ref{ulln_ass} hold true.
Then $\widehat \theta -\theta_0=o_{\mathbb{P}}(1)$.
\end{theor}

\subsection{Asymptotic normality\label{sec:normal}}

\begin{assumption}\label{reg_cure_ass_2}
\emph{(The cure regression model)}

\begin{enumerate}
\item  For any $x\in\mathcal{X}$, the map $\theta\mapsto \pi (x; \theta)$ is twice continuously differentiable.

\item\label{A_pos_def} The true value $\theta_0$ is an interior point of $\Theta$, $$\mathbb{E}\left[\left\| \frac{\partial \pi (\theta_0)}{\partial \theta}  \right\|^2 \right]<\infty$$ and the $(p+1)\times (p+1)-$matrix
$$
A(\theta_0)=\mathbb{E}\left[ \frac{\omega(X)}{\pi (\theta_0)[1-\pi (\theta_0)]} \frac{\partial \pi (\theta_0)}{\partial \theta}    \frac{\partial \pi (\theta_0)}{\partial \theta}  ^\top\; \right]
$$
is positive definite.

\item\label{ULLN_as_nor} For any $0\leq k\leq l \leq  p,$ the families of functions of $x$ indexed by $\theta$
\begin{align*}
\mathcal{F}_{1,kl}\! &= \!\left\{ \frac{\partial^2 \pi}{\partial \theta_{(k)} \partial \theta_{(l)} }(x;\theta)   : x\!\in\mathcal{X}, \theta\! \in \Theta \! \right\}\! ,\\ 
\mathcal{F}_{2,kl}\! &=  \!\left\{\!\! \left(\! \frac{\partial \pi}{\partial \theta_{(k)} } \; \frac{\partial \pi}{\partial \theta_{(l)}} \! \right) \!\! (x;\theta)   : x\!\in\mathcal{X}, \theta\! \in \Theta \! \right\},
\end{align*}
are $\mathbb{P}_{X}-$Glivenko-Cantelli classes of functions of $X$ with integrable envelopes.

\end{enumerate}

\end{assumption}

\begin{assumption}\label{uclt_ass2}
\emph{(I.I.D. representation)}
Let $\varphi(X)$ be a  vector-valued function  such that $\mathbb{E}\{\| \varphi(X)\|^2\}<\infty$. Then  there exists $\mu_C^\varphi (Y,\Delta, X)$ a zero-mean vector-valued function  that depends on $\varphi(X)$,  such that $\mathbb{E}\{\| \mu_C^\varphi (Y,\Delta, X)\|^2\}<\infty$ and
$$
\frac{1}{n} \sum_{1\leq i\leq n} \left[ B^*_i(\widehat S_C) - B^*_i( S_C)  \right]\varphi (X_i) = \frac{1}{n} \sum_{1\leq i\leq n} \mu_C^\varphi(Y_i,\Delta_i, X_i) + o_{\mathbb{P}}(n^{-1/2}) ;
$$

\end{assumption}

\quad

Assumption \ref{reg_cure_ass_2} introduces mild standard regularity conditions on the cure regression model. In particular, Assumption {\red \ref{reg_cure_ass_2}.\ref{ULLN_as_nor} yields} the uniform law of large numbers and guarantees that the remainder terms in the standard Taylor expansion used to {\red establish} asymptotic normality are uniformly negligible. Such an assumption on the complexity of the classes of first and second order derivatives of the functions in the model are satisfied by the standard parametric models such as logit and probit models. As an alternative to Assumption {\red\ref{reg_cure_ass_2}.\ref{ULLN_as_nor}}, we could impose condition (\ref{suff_c1}) and slightly stronger regularity conditions on the model $\mathcal{M}$. The details are \color{black} provided in \color{black} the proof of Theorem \ref{prop_tcl}. Furthermore, note that an asymptotic representation as required in Assumption \ref{uclt_ass2} is very common in survival analysis models with `nuisance' parameters which {\red may} belong to a space of functions. {\red In the Appendix (Section \ref{sec:unifiid}), } we provide details for many standard survival estimators: Kaplan-Meier, conditional Kaplan-Meier (Beran), Cox model, transformation model, and proportional odds model.

\color{black}

In general, the expression of the function $\mu_C^\varphi(Y,\Delta, X) $  in Assumption \ref{uclt_ass2} depends on the joint law of 
$(Y,\Delta, X)$. {\red Furthermore, } this function contributes to the asymptotic variance of {\red our estimator $\widehat\theta$ proposed in (\ref{mle}) which}, hence, will {\red differ from that} 
of the infeasible {\red estimator} defined with $B^*(S_C)$ instead of $B^*(\widehat S_C)$.

\begin{theor}\label{prop_tcl}
Assume the conditions of Assumptions \ref{dgp} and \ref{reg_cure_ass}, and Lemma \ref{suff_cdt_ulln} are met. Moreover, let Assumptions \ref{reg_cure_ass_2} and \ref{uclt_ass2} hold true, and
$$
\varphi(X)= \frac{\omega(X)}{ \pi (\theta_0) [1 - \pi (\theta_0)  ]  }  \;  \frac{\partial \pi(\theta_0) }{\partial \theta}   .
$$
Then
$$
\widehat \theta -\theta_0=  A(\theta_0) ^{-1}  \frac{1}{n} \sum_{i=1}^n  \left\{ \mu(Y_i,\Delta_i, X_i;\theta_0) + \mu_C^\varphi (Y_i,\Delta_i, X_i;\theta_0) \right\}  \\ + o_{\mathbb{P}}(n^{-1/2}).
$$
where
$$
\mu(Y,\Delta, X;\theta_0)  =  \frac{ \left[B^*( S_C)  -\pi (\theta_0)\right]\omega(X)}{\pi (\theta_0)[1-\pi (\theta_0)]} \; \frac{\partial \pi(\theta_0) }{\partial \theta}
$$
and $\mu_C^\varphi (Y,\delta, X)$ is the zero-mean vector-valued function from  Assumption \ref{uclt_ass2}. In addition,
$$
\sqrt{n}\left( \widehat \theta -\theta_0 \right) \rightsquigarrow N_{p+1}\left(0, A(\theta_0) ^{-1}V(\theta_0) A(\theta_0) ^{-1}\right)
$$
with $V(\theta_0)= \Var\left\{\mu(Y,\Delta, X ;\theta_0) + \mu_C^\varphi(Y,\Delta, X ;\theta_0)\right\}$. ($\rightsquigarrow$ denotes convergence in law.)
\end{theor}

\quad

\color{black}

Following {\red an anonymous reviewer's} suggestion, let us {\red analyse} the variance $V(\theta_0)$.
It is {\red straightforward to show} that $Var\left\{\mu(Y,\Delta, X ;\theta_0)\right\}$, the variance {\red obtained} by maximizing \eqref{likedata} {\red which uses $B^*_i(S_C)$}, is larger than {\red that obtained by maximizing by \eqref{likeB} which uses $B_i$. (Of course, neither of these are feasible since  $B_i$ and $S_C$ are unknown.) However,} what {\red seems to be much less well known} is that, in many cases, the contribution of $ \mu_C^\varphi(Y,\Delta, X ;\theta_0)$, {\red due to estimating $S_C$, has the effect of making} $V(\theta_0)$ smaller than  $Var\left\{\mu(Y,\Delta, X ;\theta_0)\right\}$. This perhaps unexpected effect,  explained by  \citet{hitomi_nishiyama_okui_2008}, often occurs in semiparametric estimation {\red problems as considered herein. More specifically, their} Theorem 3 
{\green shows that that $Var\left\{\mu(Y,\Delta, X ;\theta_0)\right\} - V(\theta_0) $ is positive semi-definite  provided that }
$\mu_C^\varphi(Y,\Delta, X ;\theta_0)$ belongs to the tangent space $\mathcal T$ with respect to the `nuisance' parameter $S_C$. The space $\mathcal T$ is
defined as the mean-square closure of all linear combinations of scores for smooth parametric submodels passing through $S_C$. When there is no restriction on the form of  $S_C$, {\red this} tangent space is typically large and {\red very likely to} include $ \mu_C^\varphi(Y,\Delta, X ;\theta_0)$; {\red from} Remark 2 of \citet{hitomi_nishiyama_okui_2008}, {\red this is also true for parametric $S_C$ models estimated using maximum likelihood.} 
Finally, \citet{hitomi_nishiyama_okui_2008} {\red show (again in their Theorem 3) that} $V(\theta_0)$ is {\red \emph{strictly} less} than  $Var\left\{\mu(Y,\Delta, X ;\theta_0)\right\}$ {\red if the} projection of the score {\red function} \eqref{score_hitomi} onto $\mathcal T$ is {\red non-null}. We claim that this is indeed the case {\red for many possible $S_C$ models, and a formal investigation of this will be a focus of our future work. However, we provide numerical experiments which support this claim in Section \ref{sec:sim}.}

\color{black}

{\red To estimate the variance of $\widehat\theta$, we require estimates of $V(\theta_0)$ and $A(\theta_0)$, respectively.} If estimates of the vectors
$ \mu(Y_i,\Delta_i, X_i;\theta)$ and $ \mu_C^\varphi (Y_i,\Delta_i, X_i;\theta)$  are available, say, $ \widehat \mu_i ( \theta)$ and $ \widehat \mu_{C,i}^\varphi (\theta )$, then $V(\theta_0) $ {\red can} be estimated by {\red the} sample covariance, $n^{-1} \sum_{i=1}^n [\widehat \mu_i (\widehat \theta) +  \widehat \mu_{C,i}^\varphi (\widehat \theta )]^{\bigotimes2}$ where $a^{\bigotimes2} = a a^T$. Meanwhile, $A( \theta_0) $ {\red can} also be estimated by standard methods.  However, the estimates $ \widehat \mu_i ( \theta)$ and $ \widehat \mu_{C,i}^\varphi (\theta )$ are often quite intractable. {\red Thus, one may alternatively use} the nonparametric bootstrap; indeed, this approach works well empirically (see Appendix, Section \ref{sec:simadd}) and is used in our real data analysis.

\subsection{Oracle properties for the adaptive lasso}

Next, we
prove consistency in variable selection for the adaptive lasso proposed in Section \ref{sec:varselect}. Moreover, we prove the asymptotic normality for the true subset of coefficients. Hence, we extend the Theorem 4 of \citep{zou:2006} to the cure regression context.
Let $\theta_0 = (\theta_{0,(0)},\theta_{0,(1)},\ldots,\theta_{0,(p)})^\top$ be the true cure regression parameter vector. Assume the true model has a sparse representation, {\red where} $\mathcal{A} = \left\{j : 1\leq j\leq p , \theta_{0,(j)}\neq 0  \right\}\cup\{0\}$. Without loss of generality, suppose $\mathcal{A} = \{0,1,\ldots,p_0\}$, $p_0<p. $ Below, the subscript $\mathcal{A}$ is used to define the subvectors or blocks in matrices with components corresponding to the indices in the set $\mathcal{A}$, i.e.,  $\theta_{\mathcal{A},0}$ is the subvector of the first $p_0+1$ components of $\theta_0 $, $\partial \pi(\theta_0)/\partial \theta_{\mathcal{A}}$ denotes the vector of 
partial derivatives with respect to the first $p_0 + 1$ components of $\theta$, and $A_{\mathcal{A}}(\theta_{0})$ is the upper-left block of dimension $(p_0+1)\times (p_0+1)$ of the 
matrix $A(\theta_0)$ defined in Assumption {\red \ref{reg_cure_ass_2}.\ref{A_pos_def}}.

\begin{theor}\label{oracle}
Assume the conditions of Theorem \ref{prop_tcl} are met and $\pi(X;\theta)$ is a given function of $X^\top \theta$. Let $\widehat \theta_\lambda$ be the estimator defined in (\ref{alasso_est}) with  $w_j = |\widehat\theta_{(j)}|^{-\gamma}$ for $\gamma >0$.
Moreover, assume that $\lambda/\sqrt{n}\rightarrow 0$ and $\lambda n ^{(\gamma-1)/2}\rightarrow \infty$. Let
$\mathcal{A}_n = \left\{j : 1\leq j\leq p , \widehat \theta_{\lambda ,(j)}\neq 0  \right\}\cup\{0\}.$
Then
\begin{enumerate}
\item $\lim_{n\rightarrow \infty} \mathbb{P}(\mathcal{A}_n = \mathcal{A} )=1$.\\[-0.2cm]
\item ~\\[-1.15cm]
\begin{align*}
 &\widehat \theta_{\!\mathcal{A},\lambda} - \theta_{\!\mathcal{A},0} \\[-0.2cm]
 &=  A_{\mathcal{A}}(\theta_{0}) ^{-1}  \frac{1}{n} \sum_{i=1}^n  \left\{ \mu_\mathcal{A}(Y_i, \Delta_i, X_i;\theta_{0})+ \mu_{\mathcal{A},C}^\varphi (Y_i,\Delta_i, X_i;\theta_{0}) \right\}
 + o_{\mathbb{P}}(n^{-1/2}).
\end{align*}
where
$$
\mu_{\mathcal{A}}(Y,\Delta, X;\theta_{0})  =  \frac{ \left[B^*( S_C)  -\pi (\theta_{0})\right]\omega(X)}{\pi (\theta_{0})[1-\pi (\theta_{0})]} \; \frac{\partial \pi(\theta_{0}) }{\partial \theta_{\mathcal{A}}}
$$
and $\mu_{\mathcal{A},C}^\varphi (Y,\Delta, X;\theta_{0})$ is the zero-mean vector-valued function from Assumption \ref{uclt_ass2} considered with
$$
\varphi_{\mathcal{A}}(X)= \frac{\omega(X)}{ \pi (\theta_{0}) [1 - \pi (\theta_{0})  ]  }  \;  \frac{\partial \pi(\theta_{0}) }{\partial \theta_{\mathcal{A}}}   .
$$
In addition,
$$
\sqrt{n}\left(  \widehat \theta_{\mathcal{A},\lambda} -\theta_{\mathcal{A},0} \right) \rightsquigarrow N_{p_0+1}\left(0, A_{\mathcal{A}}(\theta_{ 0}) ^{-1}V_{\mathcal{A}}(\theta_{ 0}) A_{\mathcal{A}}(\theta_{ 0}) ^{-1}\right)
$$
with $V_{\mathcal{A}}(\theta_{0})= Var\left\{\mu_{\mathcal{A}} (Y,\Delta, X ;\theta_{0}) + \mu_{\mathcal{A},C}^\varphi(Y,\Delta, X ;\theta_{0})\right\}$.
\end{enumerate}
\end{theor}

As was the case for 
Theorem \ref{prop_tcl},
we can obtain Theorem \ref{oracle} by imposing condition (\ref{suff_c1}) and slightly stronger regularity conditions on the model $\mathcal{M}$ instead of Assumption {\red\ref{reg_cure_ass_2}.\ref{ULLN_as_nor}}.

\section{Simulation studies\label{sec:sim}}

\subsection{Setup\label{sec:simsetup}}

We first generate $B$, $T_0$ and $C$, from which we obtain $T = T_0 $ when $B=0$ and $T=\infty$ otherwise, and, hence, the observed time, $Y = T \wedge C$, and censoring indicator, $\Delta = (1-B)\ind(T_0 \le C)$, respectively. The cure status is given by $B \sim \text{Bernoulli}(\pi)$ where $\pi (\theta) =  1 / \{1 + \exp(- X^\top  \theta)\}$, $X = (1, X_{(1)}, X_{(2)})^\top$, and $X_{(1)}$ and $X_{(2)}$ are independent $\text{Normal}(0,1)$ variables. We set $\theta_0 = (\theta_{0,(0)}, 1, 1)^\top$ with $\theta_{0,(0)} \in \{-1.85,-0.55\}$ such that the marginal cure proportion $\pi_m = \mathbb{E}\{\pi (\theta)\} \in \{0.2, 0.4\}$.
Consider the survivor function
\begin{equation*}
S_{T_0} (t\mid X) = \left\{\frac{\exp(-t^\kappa) - \exp(-\tau^\kappa)}{1- \exp(-\tau^\kappa)}\right\}^\psi
\end{equation*}
which is that of a truncated Weibull whose support is $(0,\tau)$ with a rate parameter, $\psi$, and a shape parameter, $\kappa$. The latency time, $T_0$, was generated according to this distribution with $\psi = \exp(X^\top \beta_{T_0})$ and $\kappa = (1/\psi)^\nu$ where $\beta_{T_0} = (0, 0, 1)^\top$ and $\nu \in \{0, 2\}$; the proportional hazards property holds when $\nu = 0$. The value of $\tau$ was set at the 95th percentile of the marginal untruncated distribution, i.e., $\tau$ is the unique solution $\mathbb{E}\left\{\exp(-\psi\tau^\kappa) \right\} = 0.05$, and, clearly, $\tau$ depends on the value of $\nu$.
Lastly, the censoring time, $C$, was generated from an exponential distribution with rate parameter $\psi_C = \exp(X^\top \beta_C)$ where $\beta_C = (\beta_{C,(0)}, 0, 1)^\top$. The value $\beta_{C,(0)}$ was chosen such that the overall censored proportion is given by $\pi_\text{cen} = \Pr(\Delta=0) = \pi_m + \rho$ where $\rho \in \{0.1, 0.2\}$, and this depends on the values of $\theta_{(0)}$ and $\nu$; since $\pi_m \in \{0.2, 0.4\}$, there are then four values for the censoring proportion, $\pi_\text{cen} \in \{0.3,0.4,0.5,0.6\}$.

It is worth highlighting that $X_{(1)}$ only affects cure probability (since $\beta_{T_0,(1)}=\beta_{C,(1)}=0$), whereas $X_{(2)}$ affects all components of the data generating process (since $\theta_{(2)}=\beta_{T_0,(2)}=\beta_{C,(2)}=1$). Sample sizes of $n \in \{100, 300, 1000\}$ were considered, and, with two values for each of $\theta_{(0)}$, $\nu$, and $\rho$, there are 24 scenarios altogether. Each simulation scenario was replicated 2000 times. Here we report only on the 6 scenarios where $\pi_m=0.4$ and $\rho=0.1$. The results for the remaining scenarios are broadly similar, {\red as are other scenarios with binary covariates} (see Appendix, Section \ref{sec:simadd}).

\subsection{Estimation procedure}

We applied the estimation scheme described in Section \ref{sec:est} to the simulated data with $S_C$ estimated using a Cox model in which both covariates, $X_{(1)}$ and $X_{(2)}$, appear as predictors. Table \ref{tab:res} displays the average bias and standard error of estimates over simulation replicates. While the bias can be somewhat large when $n=100$, this vanishes as the sample size increases. Similarly, the standard errors also decrease with the sample size. 
Furthermore, the results do not change appreciably when $\nu$ is varied (i.e., the approach is not sensitive to the form of $S_{T_0}$). 

\begin{table}[htbp]
\caption{Average bias and standard error (in brackets) of estimates ($\pi_m=0.4,\rho=0.1$)\label{tab:res}}
\centering
\begin{small}
\begin{tabular}{cc@{~~~~}c@{~~}c@{~~}c@{~~~~}c@{~~}c@{~~}c@{~~~~}c@{~~}c@{~~}c}
\hline
&& \multicolumn{3}{l}{\hspace{0.5cm}$n=100$} &  \multicolumn{3}{l}{\hspace{0.55cm}$n=300$} &  \multicolumn{3}{l}{\hspace{0.55cm}$n=1000$} \\
Method &$\nu$ & $\theta_{(0)}$ & $\theta_{(1)}$ & $\theta_{(2)}$ & $\theta_{(0)}$ & $\theta_{(1)}$ & $\theta_{(2)}$ & $\theta_{(0)}$ & $\theta_{(1)}$ & $\theta_{(2)}$ \\[0.1cm]
\hline
&&&&&&&&&&\\[-0.3cm]
Our
&0 &  -0.05 &   0.10 &   0.12 &  -0.01 &   0.04 &   0.03 &   0.00 &   0.01 &   0.01 \\
proposal
&  & (0.36) & (0.49) & (0.47) & (0.18) & (0.24) & (0.22) & (0.10) & (0.12) & (0.12) \\[0.05cm]
&2 &  -0.05 &   0.13 &   0.11 &  -0.02 &   0.04 &   0.02 &   0.00 &   0.01 &   0.01 \\
&  & (0.37) & (0.53) & (0.44) & (0.19) & (0.25) & (0.21) & (0.10) & (0.13) & (0.11) \\[0.1cm]
\texttt{smcure}
&0 &  -0.08 &   0.09 &   0.09 &  -0.02 &   0.03 &   0.02 &  -0.01 &   0.01 &   0.01 \\
&  & (0.33) & (0.40) & (0.38) & (0.17) & (0.20) & (0.19) & (0.09) & (0.11) & (0.10) \\[0.05cm]
&2 &  -0.12 &   0.10 &  -0.09 &  -0.08 &   0.04 &  -0.13 &  -0.06 &   0.01 &  -0.13 \\
&  & (0.35) & (0.40) & (0.33) & (0.19) & (0.21) & (0.17) & (0.10) & (0.11) & (0.09) \\[0.1cm]
\hline
\end{tabular}
\end{small}
\end{table}

By way of comparison, we also applied the EM approach of \citet{pengdear:2000} and \citet{syetal:2000} which has been implemented in the \texttt{smcure} \citep{chaoetal:2012} package in \texttt{R} \citep{R:2018}. In contrast to our scheme, $S_{T_0}$, rather than $S_C$, must be estimated. Thus, $S_{T_0}$ was estimated using a Cox model in which both covariates, $X_{(1)}$ and $X_{(2)}$, appear as predictors. The results, also shown in Table \ref{tab:res}, are similar to those of our proposal when $\nu=0$. However, when $\nu = 2$ (i.e., $S_{T_0}$ does not have the proportional hazards property), we see bias in the \texttt{smcure} estimates which does not disappear with increasing sample size. In particular, the bias manifests through $\widehat\theta_{(0)}$ and $\widehat\theta_{(2)}$; interestingly, $\widehat\theta_{(1)}$ is unaffected (i.e., the coefficient of $X_{(1)}$, the covariate which only enters the cure component).

{\red As discussed in Section \ref{sec:normal}, the setting we consider in this paper lies within the theory of \citet{hitomi_nishiyama_okui_2008}, suggesting that the variability of the estimates will be higher when the true $S_C$ is used for estimation as per \eqref{likedata} than when $\widehat S_C$ is used as per \eqref{likedata_b}, and, in turn, it will be lower still if the true cure labels are used as per \eqref{likeB}. Although only the $\widehat S_C$ case is feasible, we display the standard errors for all three cases in Table \ref{tab:Bs}, and the results for the other simulation scenarios are given in the Appendix (Section \ref{sec:simadd}). Separately, but not shown, we have estimated $S_C$ parametrically using maximum likelihood, and considered additional scenarios where the true $S_C$ does not depend on $X$ (estimating $S_C$ both parametrically and non-parametrically); in all cases we have considered, the results are as predicted by \citet{hitomi_nishiyama_okui_2008}.}

\begin{table}[htbp]
\caption{Standard error of estimates using different $B$ types ($\pi_m=0.4,\rho=0.1$)\label{tab:Bs}}
\centering
\begin{small}
\begin{tabular}{cc@{~~~~}c@{~~}c@{~~}c@{~~~~}c@{~~}c@{~~}c@{~~~~}c@{~~}c@{~~}c}
\hline
&& \multicolumn{3}{l}{\hspace{0.4cm}$n=100$} &  \multicolumn{3}{l}{\hspace{0.4cm}$n=300$} &  \multicolumn{3}{l}{\hspace{0.4cm}$n=1000$} \\
$\nu$ & $B$ Type & $\theta_{(0)}$ & $\theta_{(1)}$ & $\theta_{(2)}$ & $\theta_{(0)}$ & $\theta_{(1)}$ & $\theta_{(2)}$ & $\theta_{(0)}$ & $\theta_{(1)}$ & $\theta_{(2)}$ \\[0.1cm]
\hline
&&&&&&&&&&\\[-0.3cm]
0  &   $B^*(S_C)$      & (0.40) & (0.53) & (0.50) & (0.22) & (0.28) & (0.26) & (0.12) & (0.14) & (0.13) \\
   &   $B^*(\widehat S_C)$ & (0.36) & (0.49) & (0.47) & (0.18) & (0.24) & (0.22) & (0.10) & (0.12) & (0.12) \\[0.03cm]
   &   $B$             & (0.25) & (0.32) & (0.30) & (0.14) & (0.17) & (0.17) & (0.08) & (0.09) & (0.09) \\[0.13cm]
2  &   $B^*(S_C)$      & (0.45) & (0.58) & (0.50) & (0.23) & (0.29) & (0.25) & (0.12) & (0.14) & (0.12) \\
   &   $B^*(\widehat S_C)$ & (0.37) & (0.53) & (0.44) & (0.19) & (0.25) & (0.21) & (0.10) & (0.13) & (0.11) \\[0.03cm]
   &   $B$             & (0.26) & (0.30) & (0.30) & (0.14) & (0.17) & (0.17) & (0.08) & (0.09) & (0.09) \\[0.08cm]
\hline
\end{tabular}
\end{small}
\end{table}

\subsection{Selection procedure\label{sec:simselect}}

We simulated data as per Section \ref{sec:simsetup}, but with four additional independent $\text{Normal}(0,1)$ variables, $X_{(3)}$, $X_{(4)}$,  $X_{(5)}$, and $X_{(6)}$. These variables do not affect the cure probability, i.e., their $\theta$ coefficients are zero, but we set $\beta_{T_0,(3)}=\beta_{C,(3)}=1$ so that $X_{(3)}$ affects other aspects of the data generating process; the $\beta_{T_0}$ and $\beta_{C}$ coefficients for $X_{(4)}$,  $X_{(5)}$, and $X_{(6)}$ are all zero. {\red The bias and standard errors for this setup can be found in the Appendix (Section \ref{sec:simadd}), and, expectedly, they are a little larger than the scenarios with two covariates. However, here, we focus on the results of variable selection} using the alasso where $\lambda$ was chosen by minimizing the cross-validation error (see Appendix, Section \ref{sec:imp} for {\red details of this algorithm}). Let $\lambda^{\text{CVE}}$ denote this $\lambda$ value, and define the following commonly-used metrics for assessing the selection performance: $\text{C}  = \sum_{j = 4}^6 \ind(\widehat\theta_{\lambda^{\text{CVE}},(j)} = 0)$, the number of coefficients \emph{correctly} set to zero, $\text{IC} = \sum_{j = 1}^2 \ind(\widehat\theta_{\lambda^{\text{CVE}},(j)} = 0)$, the number of coefficients \emph{incorrectly} set to zero, and $\text{DF} = \sum_{j = 0}^6 \ind(\widehat\theta_{\lambda^{\text{CVE}},(j)} > 0)$,
the model \emph{degrees of freedom} (i.e., the number of non-zero parameters); in our setup, for the oracle model, C $= 4$, IC $= 0$, and DF $=3$. These metrics, averaged over simulation replicates, are shown for the alasso in Table \ref{tab:resvar}. In line with what we expect from Theorem \ref{oracle}, IC approaches zero as the sample size increases, while C approaches four; again the results are unaffected by $\nu$. 

\begin{table}[htbp]
\caption{Correct zeros, incorrect zeros, and model degrees of freedom ($\pi_m=0.4,\rho=0.1$)\label{tab:resvar}}
\centering
\begin{small}
\smallskip
\begin{tabular}{cc@{~~~~}c@{~~}c@{~~}c@{~~~~}c@{~~}c@{~~}c@{~~~~}c@{~~}c@{~~}c}
\hline
&& \multicolumn{3}{l}{\hspace{0.3cm}$n=100$} &  \multicolumn{3}{l}{\hspace{0.3cm}$n=300$} &  \multicolumn{3}{l}{\hspace{0.3cm}$n=1000$} \\
Type & $\nu$ & C & IC & DF & C & IC & DF & C & IC & DF \\[0.1cm]
\hline
&&&&&&&&&&\\[-0.3cm]
oracle &  & 4.00 & 0.00 & 3.00 & 4.00 & 0.00 & 3.00 & 4.00 & 0.00 & 3.00\\[0.1cm]
alasso & 0 & 3.34 & 0.17 & 3.49 & 3.52 & 0.00 & 3.48 & 3.69 & 0.00 & 3.31 \\[0.0cm]
       & 2 & 3.33 & 0.15 & 3.52 & 3.55 & 0.00 & 3.44 & 3.72 & 0.00 & 3.28 \\[0.1cm]
\hline
\end{tabular}
\end{small}
\end{table}

\section{Data analysis: colon cancer\label{sec:data}}


We consider a colon cancer dataset (contained in the \texttt{survival} package in \texttt{R}) which was collected as part of a well-known national intergroup randomized controlled trial (involving Eastern Cooperative Oncology Group, the North Central Cancer Treatment Group, the Southwest Oncology Group, and the Mayo Clinic). The aim of the study was to investigate the efficacy of the drugs levamisole and 5FU for the treatment of colon cancer following surgery; relapse-free survival was the outcome variable of interest, i.e., time from randomization until the earlier of cancer relapse or death. In total, 929 patients with stage C disease enrolled during the period March 1984 to October 1987, with a maximum follow-up time of nine years. These patients were randomized to the following treatments: observation (control / reference group), levamisole, and a combined treatment of levamisole and 5FU.
In addition to the treatment variable, a variety of binary covariates were recorded (reference categories are shown first): days since surgery, $\{\le20,>20\}$; sex, $\{\text{female},\text{male}\}$; obstruction of colon by tumour, $\{\text{no},\text{yes}\}$; adherence to nearby organs, $\{\text{no},\text{yes}\}$; depth of invasion, $\{\text{submucosa or muscular layer} ,\text{serosa}\}$; positive lymph nodes, $\{\le4,>4\}$. Furthermore, the age of the patient was recorded, and we use a mean-centered version (mean age is 59.75 years). See \citet{moertel:1990} for further details. This dataset is a candidate for cure analysis based on its Kaplan-Meier (KM) curve which has a clear plateau at approximately 40\% (Figure \ref{fig:kmcurves}). 

\begin{figure}
\begin{center}
\begin{tabular}{c}
\includegraphics[width=0.73\textwidth, trim = {0cm 0.7cm 0.5cm 2cm}, clip]{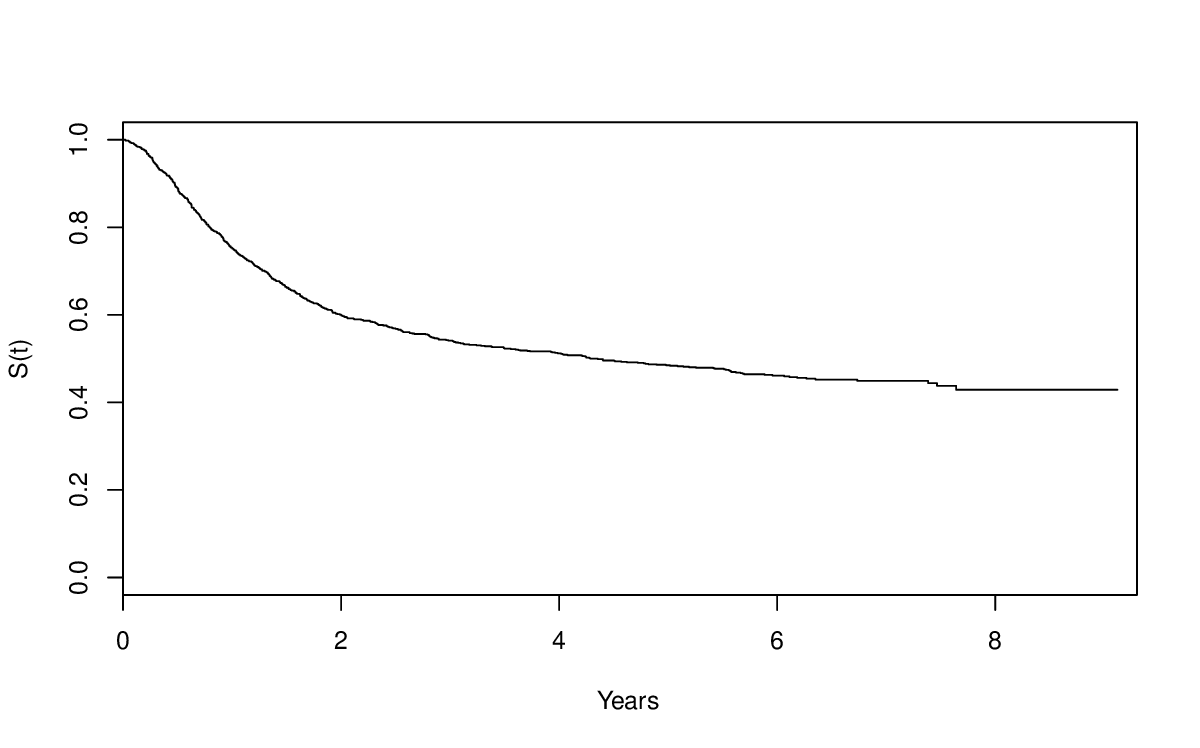}
\end{tabular}
\caption{Kaplan-Meier curve for colon data.\label{fig:kmcurves}}
\end{center}
\vspace{-0.5cm}
\end{figure}

We estimate the cure parameters using our proposed procedure and, for comparison, apply \texttt{smcure}. 
We use a Cox model with all covariates for $\widehat S_C$ in our approach and for $\widehat S_{T_0}$ in \texttt{smcure}, and a logistic regression cure model in both cases; confidence intervals and p-values are produced using bootstrapping. We also carry out variable selection using the adaptive lasso where covariates are standardized for variable selection, after which the estimates are transformed back to correspond to the original scale. See Table \ref{tab:colon}.


\begin{table}[htbp]
\caption{Colon cancer estimates\label{tab:colon}}
\centering
\begin{footnotesize}
\smallskip
\begin{small}
\begin{tabular}{l@{~~}l@{~\quad}c@{}r@{~~}c@{~~}c@{~~}c@{}r@{~~}r@{~~}c@{~~\quad}r@{~~}c@{~~}c@{}}
\hline
&&&&&&&&&&&&\\[-0.3cm]
 & & &  \multicolumn{3}{c}{Unpenalized} && \multicolumn{2}{c}{alasso} && \multicolumn{3}{c}{\texttt{smcure}}\\[0.1cm]
\multicolumn{2}{c}{Covariate} & &  Est. &  95\%CI &  pval    &   &   &  Est. &    &  Est. & 95\%CI &  pval   \\
\hline
&&&&&&&&&&&&\\[-0.3cm]
Intercept      &           &  &  0.66 & ( 0.05, 1.36) & 0.03  &    & &  0.32 &  &  0.57 & (-0.05, 1.15) & 0.06  \\
Treatment      & Lev       &  &  0.42 & (-0.11, 1.22) & 0.14  &    & &  0.00 &  &  0.19 & (-0.21, 0.61) & 0.33  \\
               & Lev+5FU   &  &  0.94 & ( 0.39, 1.73) & 0.00  &    & &  0.60 &  &  0.71 & ( 0.30, 1.15) & 0.00  \\
Surgery        & $>20$days &  & -0.65 & (-1.63,-0.11) & 0.02  &    & & -0.41 &  & -0.49 & (-0.88,-0.09) & 0.01  \\
Age           & Years     &  & -0.01 & (-0.03, 0.00) & 0.10  &    & &  0.00 &  & -0.01 & (-0.02, 0.00) & 0.24  \\
Sex            & Male      &  & -0.24 & (-0.75, 0.16) & 0.28  &    & &  0.00 &  & -0.11 & (-0.46, 0.25) & 0.60  \\
Obstruction    & Yes       &  & -0.56 & (-2.01, 0.05) & 0.08  &    & & -0.19 &  & -0.18 & (-0.61, 0.20) & 0.35  \\
Adherence      & Yes       &  & -0.42 & (-1.00, 0.07) & 0.10  &    & &  0.00 &  & -0.69 & (-1.50,-0.17) & 0.01  \\
Depth          & Serosa    &  & -0.81 & (-1.51,-0.26) & 0.01  &    & & -0.54 &  & -0.71 & (-1.27,-0.19) & 0.02  \\
Nodes          & $>4$      &  & -1.18 & (-1.63,-0.82) & 0.00  &    & & -0.94 &  & -1.16 & (-1.59,-0.80) & 0.00  \\[0.1cm]
\hline
&&&&&&&&&&&&\\[-0.3cm]
\multicolumn{13}{p{0.92\textwidth}}{\footnotesize Age is mean-centered.}
\end{tabular}
\end{small}
\end{footnotesize}
\end{table}
\begin{figure}[htbp]
\begin{center}
\begin{tabular}{c}
\includegraphics[width=0.9\textwidth, trim = {0.2cm 0.6cm 0.7cm 1cm}, clip]{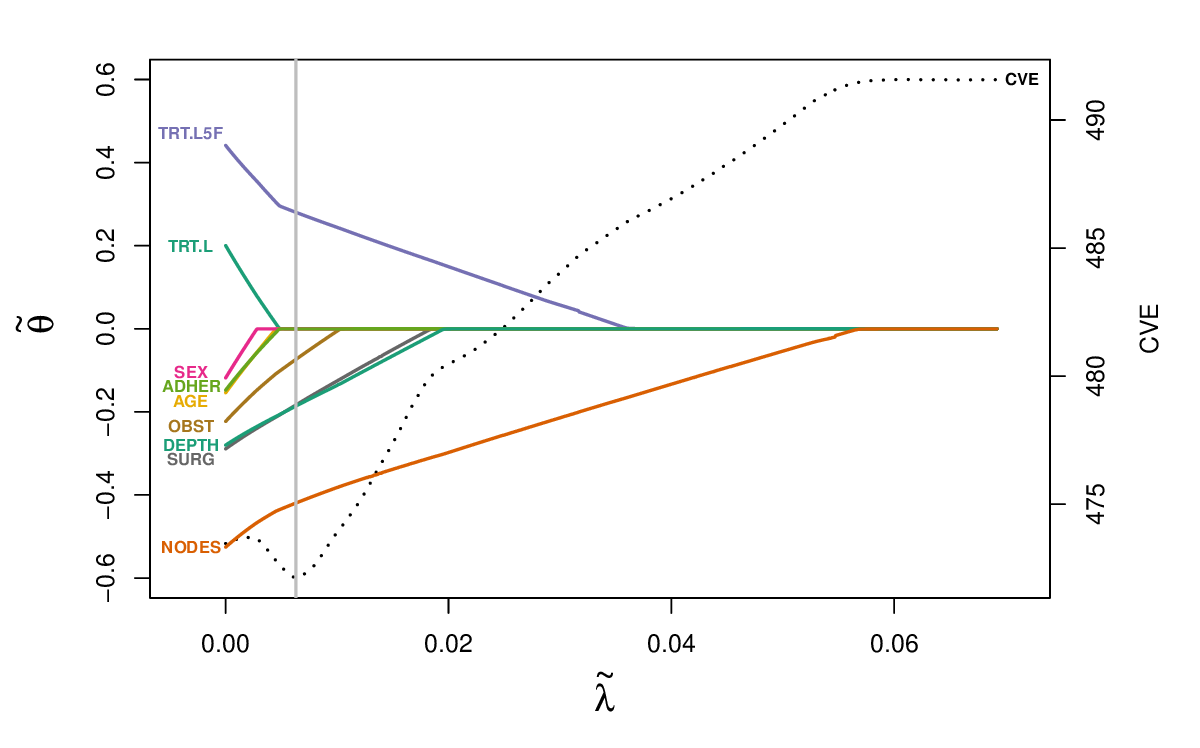}
\end{tabular}
\caption{Adaptive lasso regularization paths for colon data. Estimates, denoted by $\widetilde \theta$, are for the standardized covariates (hence, their magnitudes can be compared), and are plotted against the tuning parameter $\widetilde\lambda = \lambda/n$. Also shown is the 10-fold cross-validation error curve (dotted, and see right-hand $y$-axis) with vertical line indicating its minimum.\label{fig:regcolon}}
\end{center}
\end{figure}

First we consider the the unpenalized estimates. The effect of the levamisole treatment does not significantly increase the cure probability (compared with a patient receiving no treatment), while the combination of levamisole with 5FU does; indeed, the odds of being cured for this latter treatment are $2.56$ ($=\exp(0.94)$) with 95\% confidence interval given by $(1.48,5.64)$. The effect of all other covariates is to reduce the cure probability, albeit sex is not statistically significant, and obstruction and adherence are only just significant at the 10\% level. The results for \texttt{smcure} are broadly similar, apart from the fact that adherence is statistically significant. Now, turning to the penalized alasso estimates, several coefficients have been set to zero, and the retained variables are those with smaller p-values from the unpenalized model. The regularization paths for standardized cure coefficients (i.e., those corresponding to standardized covariates) provide useful information on the relative importance of each covariate; these are shown in Figure \ref{fig:regcolon}. We can see immediately that the Lev+5FU treatment is one of the most important features. The number of positive lymph nodes is also highly important, and the presence of more than four such lymph nodes reduces the chance of cure. Next, the timing of surgery and depth of the tumour have similar importance, followed by the presence of an obstruction.

\section{Discussion\label{sec:disc}}

We have proposed an {\red IPCW-likelihood-based} estimation procedure for cure regression models; elsewhere IPCW has been advocated by \citet{KMintegral_laan} as a device for producing straightforward estimators in complex survival data. In contrast to current cure estimation procedures in the literature, our assumptions are placed on $S_C$ while $S_{T_0}$ is completely unspecified. Although we have considered a Cox model estimator for $S_C$ in the examples in this article, any arbitrarily flexible model can be used in practice as this simply ``plugs in'' to the likelihood function given in (\ref{likedata_b}) without any added complexity to the estimation procedure. Moreover, our asymptotic results still hold once the estimator, $\widehat S_C$, permits an iid representation (and we have given many common examples in the Appendix, Section \ref{sec:unifiid}).

Except for the case of a fully nonparametric approach like in \citet{xu:2014} (which suffers from the curse of dimensionality), existing cure regression models impose assumptions on both the cure proportion and the law of the susceptible individuals, without satisfactory model diagnosis (besides ad-hoc efforts). In our approach, one can first use standard diagnosis procedures to validate the censoring model as this is identifiable from the observed data directly. For example, one could assess the proportional hazards assumption for $S_C$ using the test due to  \citet{grambschthern:1994} which is implemented in the \texttt{cox.zph} function in the \texttt{survival} package in \texttt{R}. (Although not shown, this test supported the proportional hazards assumption in the application considered in Section \ref{sec:data}.) Next one could consider model diagnostics for the cure regression. 
Furthermore, note that our theory is not limited to the logistic model choice used in our applications, and, more generally still, the functional form of the cure regression model $\pi(X^\top\theta)$ could itself be estimated (e.g., in a similar manner to \citet{amicoetal:2018} who extended the EM approach in this way). Goodness-of-fit for the cure model and estimation of its functional form are beyond the scope of the current article.


Although the extension to penalized estimation is straightforward and computationally efficient in our setting, we note that penalized selection in the existing EM setting was considered by \citet{liuetal:2012}. However, those authors remark on the computational intensiveness of the procedure (penalized estimation and tuning parameter selection are embedded in EM). Furthermore, their approach is limited to a Cox PH model for $S_{T_0}$, and lacks asymptotic theory.

{\red Lastly, our focus has been on modelling the cure probability without specifying a model for the latency distribution. However, as pointed out by an anonymous reviewer, the latter might also be of interest in applications. Indeed, from the fact that $S_T(t\mid X) = \{1-\pi(X)\} S_{T_0}(t\mid X) + \pi(X)$, we have
$S_{T_0}(t\mid X) = \{S_T(t\mid X) - \pi(X)\}/\{1-\pi(X)\}$
which immediately provides $\widehat S_{T_0}$ given $\widehat \pi$ and $\widehat S_T$. Of course, $S_T$ can be modelled in an arbitrarily flexible way as it is based directly on the observable data, and any of the standard survival models can simply ``plug in'' just as for the $S_C$ model in our framework.}

\newpage

\bibliographystyle{apalike}
\bibliography{refs}

\newpage

\section{Appendix\label{sec:app}}
\setcounter{equation}{0}

\subsection{Proofs\label{sec:proofs}}

\begin{lemma}\label{core_id}
Let $r(Y,X)$ be an integrable real-valued function. Under conditions (\ref{ass1}) and (\ref{ass2}),
$$
\expect\left(\left.\frac{\Delta r(Y,X)}{S_C(Y-\mid X)} ~\right| X\right)~=~ \expect\{r(T_0,X)\mid X\} \{1-\pi(X)\}.
$$
\end{lemma}

\begin{proof}[Proof of Lemma  \ref{core_id}]
First, we have that
\begin{align*}
\expect(\Delta \mid T_0, X ) &= \expect\{\ind(T_0 \le C) (1-B) \mid T_0, X \} \notag\\
&= \expect\{\ind(T_0 \le C)\expect(1-B \mid C, T_0, X ) \mid T_0, X \} \notag\\
&= S_C(T_0-\mid X) \,\{1-\pi(X)\} ,
\end{align*}
where $\expect(1-B \mid C, T_0, X ) = \expect(1-B \mid X) = 1-\pi(X)$ follows from (\ref{ass2}), and $\expect[\ind(T_0 \le C) \mid T_0, X ] = S_C(T_0-\mid X)$ follows from (\ref{ass1}). Thus,
\begin{align*}
\expect\left(\left.\frac{\Delta r(Y,X)}{S_C(Y-\mid X)} ~\right| X\right)
&= \expect\left(\left.\frac{\Delta r(T_0,X)}{S_C(T_0-\mid X)} ~\right| X\right) \\
&= \expect\left\{\left. \frac{ r(T_0,X)}{S_C(T_0-\mid X)} \expect\left(\left.\Delta ~\right| T_0, X\right)  ~\right| X\right\} \\
&=\expect\{r(T_0,X)\mid X\} \{1-\pi(X)\},
\end{align*}
as required. (The first equality holds since only the $\Delta=1$ case contributes where $Y=T_0$.)
\end{proof}

\begin{lemma}\label{indep_eq}
Let $B$ be a Bernoulli random variable, $T_0$ a nonnegative random variable  and let
$T =  T_0$ if $B=0$ and $T=\infty$ if $B=1$. Then
$$
T_0 \perp C   \mid X  \text{ and } B \perp (T_0, C)   \mid X  \quad \Longrightarrow \quad T \perp C   \mid X .
$$
\end{lemma}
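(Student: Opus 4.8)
The plan is to verify the conclusion directly by establishing factorization of the joint conditional law on a generating class of events. Conditional independence $T \perp C \mid X$ is equivalent to
\[
\Pr(T > t, C \in D \mid X) = \Pr(T > t \mid X)\,\Pr(C \in D \mid X)
\]
holding for all $t \in [0,\infty]$ and all Borel sets $D$, since the events $\{T > t\}$ form a $\pi$-system that, together with the atom $\{T=\infty\}$, generates the $\sigma$-algebra of the extended-valued variable $T$, and the $\{C \in D\}$ generate that of $C$. So it suffices to prove this rectangle-wise identity.

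First I would condition on the Bernoulli variable $B$ and split according to its two values. By the definition of $T$, on $\{B=1\}$ we have $T=\infty$, so $\{T>t\}$ occurs for every finite $t$, whereas on $\{B=0\}$ we have $T=T_0$. Writing $\pi(X)=\Pr(B=1\mid X)$, the tower property gives
\[
\Pr(T > t, C \in D \mid X) = \pi(X)\,\Pr(C \in D \mid B=1, X) + \{1-\pi(X)\}\,\Pr(T_0 > t, C \in D \mid B=0, X).
\]
Next I would invoke the two hypotheses. Assumption (\ref{ass2}), $B \perp (T_0,C)\mid X$, means the conditional law of $(T_0,C)$ given $(B,X)$ does not depend on $B$, so $\Pr(C \in D \mid B=1,X)=\Pr(C\in D\mid X)$ and $\Pr(T_0 > t, C \in D\mid B=0,X)=\Pr(T_0 > t, C\in D\mid X)$. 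Assumption (\ref{ass1}), $T_0 \perp C \mid X$, then factorizes the latter as $\Pr(T_0 > t\mid X)\,\Pr(C\in D\mid X)$. Substituting and pulling out the common factor $\Pr(C\in D\mid X)$ yields
\[
\Pr(T > t, C \in D\mid X) = \Pr(C\in D\mid X)\,\bigl[\pi(X) + \{1-\pi(X)\}\Pr(T_0 > t\mid X)\bigr].
\]
The bracketed term is precisely the mixture cure survivor function $S_T(t\mid X)=\Pr(T>t\mid X)$, which delivers the product form for every finite $t$ and every $D$.

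The only remaining care — and the main, if minor, obstacle — is the measure-theoretic bookkeeping around the atom at infinity. I would confirm the factorization persists at $t=\infty$ using $\{T=\infty\}=\{B=1\}$, so that $\Pr(T=\infty,\,C\in D\mid X)=\pi(X)\,\Pr(C\in D\mid X)$ follows from Assumption (\ref{ass2}) directly, matching $\Pr(T=\infty\mid X)\,\Pr(C\in D\mid X)$. Having the rectangle-wise identity for all $t\in[0,\infty]$ and all $D$, I would promote it to full conditional independence by a standard $\pi$-$\lambda$ (monotone class) argument on the two generating $\pi$-systems. Everything else reduces to routine applications of the tower property together with the two stated independence assumptions.
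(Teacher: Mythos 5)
Your argument is correct, but it takes a genuinely different route from the paper. You verify the factorization $\Pr(T>t,\,C\in D\mid X)=\Pr(T>t\mid X)\Pr(C\in D\mid X)$ directly on a generating $\pi$-system, by splitting on $B$, applying (\ref{ass2}) to drop the conditioning on $B$, applying (\ref{ass1}) to factor the $(T_0,C)$ term, and then handling the atom $\{T=\infty\}=\{B=1\}$ separately before invoking a monotone-class argument. The paper instead works entirely at the level of the graphoid-type calculus of conditional independence: it decomposes $B\perp(T_0,C)\mid X$ into $B\perp C\mid(X,T_0)$ and $B\perp T_0\mid X$, recombines the first of these with $T_0\perp C\mid X$ to obtain the stronger conclusion $(B,T_0)\perp C\mid X$, and then notes that $T$ is a measurable function of $(B,T_0)$. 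The paper's route is shorter and yields the joint independence $(B,T_0)\perp C\mid X$ as a by-product, from which independence of $C$ from \emph{any} function of $(B,T_0)$ follows at once; your route is more elementary and self-contained (no appeal to the weak-union/contraction properties as black boxes), at the cost of the measure-theoretic bookkeeping you already flag. One small point to make explicit in your write-up: identifying the bracketed term $\pi(X)+\{1-\pi(X)\}\Pr(T_0>t\mid X)$ with $\Pr(T>t\mid X)$ itself requires one more application of (\ref{ass2}), namely $\Pr(T_0>t\mid B=0,X)=\Pr(T_0>t\mid X)$, since in the lemma's setting $T_0$ is a free random variable rather than defined as $T$ conditioned on $B=0$; this is immediate but should be stated.
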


\begin{proof}[Proof of Lemma \ref{indep_eq}]
By elementary properties of the conditional independence
$$
B \perp (T_0, C)   \mid X \quad \Longleftrightarrow \quad B \perp C  \mid (X, T_0) \quad \text{ and } \quad  B \perp T_0 \mid X.
$$
Next,
$$
B \perp C  \mid (X, T_0) \quad \text{ and } \quad T_0 \perp C   \mid X \quad \Longleftrightarrow \quad  (B,T_0) \perp C   \mid X .
$$
The result follows from the fact that $T$ is completely determined by $B$ and $T_0$. \end{proof}

\quad

\begin{lemma}\label{lik_ineq_proof}
Let
\begin{equation*}
\ell^*(\theta) = \sum_{i=1}^n\left[B_i^*(S_C) \log \pi_i + \{1- B_i^*(S_C)\} \log(1-\pi_i)\right]   \omega_i .
\end{equation*}
If the cure regression model is identifiable,
\begin{equation*}
\expect\{\ell^*(\theta)\} < \expect\{\ell^*(\theta_0)\} \qquad \forall \theta \ne \theta_0
\end{equation*}
\end{lemma}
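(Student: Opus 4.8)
The plan is to reduce the expected log-likelihood to a single-observation quantity and then recognize its difference from the value at $\theta_0$ as an $\omega$-weighted integral of Kullback--Leibler divergences between Bernoulli laws. First, because the $(Y_i,\Delta_i,X_i)$ are iid and $\pi_i(\theta)=\pi(X_i;\theta)$ and $\omega_i=\omega(X_i)$ are functions of $X_i$ alone, I would write $\expect\{\ell^*(\theta)\}=n\,\expect[\,\omega(X)\{B^*(S_C)\log\pi(\theta)+(1-B^*(S_C))\log(1-\pi(\theta))\}\,]$ and condition on $X$. Invoking (\ref{expectcure}), which gives $\expect\{B^*(S_C)\mid X\}=\pi(X;\theta_0)=:\pi_0$, the tower property collapses this to
\begin{equation*}
\expect\{\ell^*(\theta)\} = n\,\expect\big[\,\omega(X)\{\pi_0\log\pi(\theta)+(1-\pi_0)\log(1-\pi(\theta))\}\,\big] =: n\,g(\theta).
\end{equation*}

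Next I would form the difference and identify it with a divergence. Subtracting gives
\begin{equation*}
g(\theta_0)-g(\theta) = \expect\Big[\,\omega(X)\Big\{\pi_0\log\tfrac{\pi_0}{\pi(\theta)}+(1-\pi_0)\log\tfrac{1-\pi_0}{1-\pi(\theta)}\Big\}\Big],
\end{equation*}
and for each fixed $X$ the bracketed term is exactly the divergence $\mathrm{KL}(\mathrm{Ber}(\pi_0)\,\|\,\mathrm{Ber}(\pi(\theta)))$. By Gibbs' inequality (the nonnegativity of KL, a consequence of Jensen applied to $-\log$) this is nonnegative and vanishes if and only if $\pi(X;\theta)=\pi_0$. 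Since $\omega(X)\ge0$ almost surely (Assumption \ref{reg_cure_ass}-\ref{ass_ome}), the integrand is pointwise nonnegative and $g(\theta_0)-g(\theta)\ge0$. Integrability is not an issue: the bounds $c\le\pi(\theta)\le1-c$ of Assumption \ref{reg_cure_ass}-\ref{ass_truth} keep $\log\pi(\theta)$ and $\log(1-\pi(\theta))$ bounded, $\omega$ is bounded with $\expect[\omega(X)]<\infty$, and $\expect[\Delta/S_C(Y-\mid X)]<\infty$ by Assumption \ref{dgp}, so every expectation above is finite.

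The strict inequality for $\theta\ne\theta_0$ is where identifiability enters, and this is the main obstacle. The weighted integral $g(\theta_0)-g(\theta)$ is zero precisely when $\omega(X)\,\mathrm{KL}(\mathrm{Ber}(\pi_0)\|\mathrm{Ber}(\pi(\theta)))=0$ almost surely, i.e.\ when $\pi(X;\theta)=\pi(X;\theta_0)$ almost surely on $\{\omega(X)>0\}$. Identifiability of the cure regression model---made quantitative by Assumption \ref{reg_cure_ass}-\ref{ass_sep}, which forces $\expect[|\pi(\theta)-\pi(\theta_0)|\,\omega(X)]>0$ for $\theta\ne\theta_0$---rules this out: whenever $\theta\ne\theta_0$, the functions $\pi(\cdot;\theta)$ and $\pi(\cdot;\theta_0)$ differ on a set of positive $\mathbb{P}_X$-measure intersected with $\{\omega>0\}$, so the nonnegative integrand is strictly positive there and $g(\theta_0)-g(\theta)>0$. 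Multiplying by $n$ yields the claim. The delicate point is purely measure-theoretic---translating ``differ as functions'' into ``strictly positive expectation of a nonnegative weighted KL integrand''---rather than any hard analysis.
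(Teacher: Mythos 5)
Your proof is correct and follows essentially the same route as the paper's: both reduce the difference $\expect\{\ell^*(\theta_0)\}-\expect\{\ell^*(\theta)\}$ via conditioning on $X$ and the identity $\expect\{B^*(S_C)\mid X\}=\pi(X;\theta_0)$ to an $\omega$-weighted expected Kullback--Leibler divergence between Bernoulli laws, with the paper's use of $\log u\le u-1$ being exactly the Gibbs inequality you invoke. The strictness argument from identifiability is also the same, so no further comment is needed.
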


\begin{proof}[Proof of Lemma  \ref{lik_ineq_proof}]
Let $\pi^* (X) = \mathbb{E}\{B_i^*(S_C)  \mid X\}$, which by construction lies between 0 and 1. Then, since for any $u>0$,  $\log (u) \leq u-1$, we deduce
\begin{multline*}
\expect\{\ell^*(\theta)\} - \expect\{\ell^*(\theta_0)\} = \expect\{\expect\{\ell^*(\theta)\}\mid X\}  - \expect\{\expect\{\ell^*(\theta_0)\}\mid X \} \\ = \expect\left\{\left[\pi^* (X) \log \frac{\pi (X)}{\pi^* (X)} + \{  1 - \pi^* (X) \} \log \frac{1- \pi (X)}{1- \pi^* (X)}\right]\omega(X)\right\}\\
\leq \expect\left[\pi^* (X) \left\{ \frac{\pi (X)}{\pi^* (X)} -1 \right\}\omega(X)  + \{  1 - \pi^* (X) \} \left\{ \frac{1- \pi (X)}{1- \pi^* (X)}-1 \right\} \omega(X) \right]=0,
\end{multline*}
and the  inequality is strict unless $[\pi (X)-\pi^* (X ) ]\omega(X)=0 $ almost surely. If the cure model is identifiable, this cannot happen and thus
$\expect\{\ell^*(\theta)\} < \expect\{\ell^*(\theta_0)\},$ $\forall \theta \ne \theta_0.$
\end{proof}

\quad

\begin{proof}[Proof of Lemma \ref{suff_cdt_ulln}]
Define the event
\begin{multline*}
\mathcal E_n = \left\{  \sup_{x \in\mathcal{X}, \; \omega(x) >0} \;\sup_{y\leq\tau(x)}\left|\widehat S_C(y-\mid x) - S_C(y-\mid x)  \right|\right. \\
\left. \leq (1/2) \inf_{x \in\mathcal{X}, \; \omega(x) >0} S_C(\tau(x)-\mid x)  \right\}.
\end{multline*}
On the set $\mathcal E_n$, given a  measurable function $\phi(\cdot),$ we can write
\begin{multline*}
\left|  \frac{1}{n} \sum_{1\leq i\leq n} \left[ B^*(\widehat S_C) - B^*( S_C) \right]\omega(X_i) \phi (X_i) \right|  \\\leq
 \frac{1}{n} \sum_{1\leq i\leq n}\frac{  \Delta_i \;  | \widehat R_i |\;\omega(X_i) \; |\phi (X_i) |}  {S_C(Y_i-\mid  X_i)\left[ S_C(Y_i-\mid  X_i)- | \widehat R_i | \right] } \\ \leq
2  \sup_{1\leq i\leq n}   | \widehat R_i |  \times \frac{1}{n} \sum_{1\leq i\leq n}\frac{ \omega(X_i)  \; |\phi (X_i) |}  {S^2_C(Y_i-\mid  X_i) },
\end{multline*}
where
$
\widehat R_i = \widehat S_C(Y_i-\mid  X_i) -  S_C(Y_i-\mid  X_i).
$
By the preservation of the Glivenko-Cantelli property for classes of functions (see Theorem 3 of \citet{vaart_wellner2000}), the set of logit transformations of $\pi(\theta)$ is a  $\mathbb{P}_{X}-$Glivenko-Cantelli class of functions of $X$ with constant envelope, provided
 Assumption {\red\ref{reg_cure_ass}.\ref{ass_truth}} and {\red\ref{reg_cure_ass}.\ref{ass_GC}} hold true.
The  statement follows from the uniform law of large numbers for the empirical process indexed by the set of  functions\\
$\left\{\omega(\cdot)S^{-2}_C(\cdot \; -\mid \cdot)|\log\{\pi(\cdot;\theta)/[1-\pi(\cdot;\theta)]\}:\theta\in\Theta\right\} $ with constant envelope, and condition (\ref{suff_c1}) which implies $\sup_{1\leq i\leq n}   | \widehat R_i | = o_{\mathbb{P}}(1) $ and $\Pr(\mathcal E_n)\rightarrow 1$.
\end{proof}

\quad

\begin{proof}[Proof of Theorem \ref{conv_th1}]
We apply Theorem 5.7 of \citet{van2000asymptotic}. First, by construction and Assumption {\red\ref{reg_cure_ass}.\ref{ass_sep}},  $\theta_0$ is a well-separated maximum of  the map $\theta \mapsto \expect\{\ell^* (\theta)\},$ with $\ell^*(\theta)$ defined in equation (\ref{likedata}). See the inequality of Lemma \ref{lik_ineq_proof}. 
Next, let us note that
$$
\widehat \ell^*(\theta) - \ell^*(\theta) =  \sum_{i=1}^n \left[ B^*(\widehat S_C) - B^*( S_C) \right] \omega(X_i) \log\left( \frac{\pi_i(\theta)}{1-\pi_i(\theta)} \right) .
$$
Since
$$
\sup_{\theta\in\Theta }\left| \widehat \ell^*(\theta)-  \expect\{\ell^* (\theta)\} \right| \leq
\sup_{\theta\in\Theta } \left| \widehat \ell^*(\theta)-   \ell ^* (\theta) \right| + \sup_{\theta\in\Theta } \left| \ell ^* (\theta)-  \expect\{\ell^* (\theta)\} \right|
$$
by our assumptions and the uniform law of large numbers for the empirical process,
$$
\sup_{\theta\in\Theta }\left| \widehat \ell^*(\theta)-  \expect\{\ell^* (\theta)\} \right| = o_{\mathbb{P}}(1).
$$
The consistency of $\widehat \theta $ follows from Theorem 5.7 of \citet{van2000asymptotic}.
\end{proof}

\quad

\begin{proof}[Proof of Theorem \ref{prop_tcl}]
By the definition of $\widehat \theta$ we have the identity
$\widehat \ell^*(\theta)$
\begin{multline*}
0 = \frac{\partial \widehat \ell^* (\widehat \theta)}{\partial \theta}   =   \frac{\partial \widehat \ell^*  (\theta_0)}{\partial \theta}  +  \frac{\partial^2 \widehat \ell^*  (\overline \theta) }{\partial \theta \partial \theta ^\top }\left( \widehat \theta - \theta_0 \right) \\ =
  \frac{\partial \widehat \ell^*  (\theta_0)}{\partial \theta}  + \frac{\partial^2 \widehat \ell^*  (\theta_0) }{\partial \theta \partial \theta ^\top } \left( \widehat \theta - \theta_0 \right)   + \left[  \frac{\partial^2 \widehat \ell^*  (\overline \theta) }{\partial \theta \partial \theta ^\top } -   \frac{\partial^2 \widehat \ell^*  ( \theta_0) }{\partial \theta \partial \theta ^\top } \right]  \left( \widehat \theta - \theta_0 \right),
\end{multline*}
where $\overline \theta$ is some point on the segment between $\widehat \theta$ and $\theta_0$.
Using the definition in equation (\ref{def_Bistar}) and  the short notation $\pi_i(\theta ) = \pi (X_i;\theta)$ and $\partial \pi_i(\theta)/ \partial \theta= \partial \pi(X_i; \theta)/ \partial \theta$, we can write
\begin{multline*}
\frac{1}{n} \frac{\partial \widehat \ell^*  (\theta_0)}{\partial \theta} = \frac{1}{n}\sum_{i=1} ^n \left[ \frac{B_i^*(\widehat S_C)}{\pi _i (\theta_0) } -  \frac{1-B_i^*(\widehat S_C)}{1- \pi _i (\theta_0) }\right] \omega_i  \frac{\partial \pi_i (\theta_0) }{\partial \theta}    \\
= \frac{1}{n} \sum_{i=1} ^n  \frac{B_i^*( S_C) - \pi _i (\theta_0)}{\pi_i (\theta_0) [1- \pi _i (\theta_0)]}\;  \omega_i   \frac{\partial \pi_i (\theta_0)}{\partial \theta}   \\
+ \frac{1}{n} \sum_{i=1} ^n  \frac{B_i^*(\widehat S_C) - B_i^*(S_C)}{\pi _i (\theta_0) [1- \pi _i (\theta_0)]}\; \omega_i \frac{\partial \pi_i (\theta_0)}{\partial \theta}   ,
\end{multline*}
and thus, by Assumption \ref{uclt_ass2} we have
$$
\frac{1}{n} \frac{\partial \widehat \ell^*  (\theta_0)}{\partial \theta} =  \frac{1}{n} \sum_{i=1}^n  \left\{ \mu(Y_i,\Delta_i, X_i;\theta_0) + \mu_C^\varphi (Y_i,\Delta_i, X_i;\theta_0) \right\}   + o_{\mathbb{P}}(n^{-1/2}).
$$
Next, for any $\theta$ we have
\begin{multline*}
\frac{1}{n}\frac{\partial^2 \widehat \ell^*  ( \theta) }{\partial \theta \partial \theta ^\top } = - \frac{1}{n} \sum_{i=1} ^n  \left[ \frac{B_i^*(\widehat S_C)}{\pi _i ^2 (\theta) } +  \frac{1-B_i^*(\widehat S_C)}{[1- \pi _i (\theta)]^2 } \right] \omega_i  \frac{\partial \pi_i (\theta)}{\partial \theta} \frac{\partial \pi_i (\theta)}{\partial \theta}^\top   \\+  \frac{1}{n} \sum_{i=1} ^n  \frac{B_i^*( \widehat S_C) - \pi _i (\theta)}{\pi _i(\theta) [1- \pi _i (\theta)]} \; \omega_i  \frac{\partial ^2\pi_i (\theta)}{\partial \theta\partial \theta^\top }  \\
\stackrel{def}{=} - H_{1n}(\widehat S_C; \theta )+H_{2n}(\widehat S_C; \theta).
\end{multline*}
Here, $\partial ^2\pi_i (\theta)/ \partial \theta\partial \theta^\top$ denotes the matrix of the second order partial derivatives of  $\pi(X_i;\theta)$ with respect to $\theta$. By the preservation of the Glivenko-Cantelli property for classes of functions (see Theorem 3 of \citet{vaart_wellner2000}), and {\red Assumptions \ref{reg_cure_ass}.\ref{ass_truth} and  \ref{reg_cure_ass}.\ref{ass_GC}}, the sets
$$
\left\{ \pi(\theta)^{-2} \omega  f , \;\;[1-\pi(\theta)]^{-2} \omega  f:f\in\mathcal{F}_{1,kl}, \theta \in \Theta\right\}
$$
and
$$
\left\{ \{ \pi(\theta)  [1-\pi(\theta)]  \}^{-1} \omega  g , \;\;  [1-\pi(\theta) ]  ^{-1} \omega  g  :g\in\mathcal{F}_{2,kl}, \theta \in \Theta\right\}
$$
with $\mathcal{F}_{1,kl}$ and $\mathcal{F}_{2,kl},$ $0\leq k,l\leq p,$ defined in Assumption {\red\ref{reg_cure_ass_2}.\ref{ULLN_as_nor}},
are $\mathbb{P}_{X}-$Glivenko-Cantelli classes of functions of $X$ with integrable envelope. By Lemma \ref{suff_cdt_ulln}, deduce that
$$
\sup_{\theta\in\Theta }\left\{ \left| H_{1n}(\widehat S_C; \theta )-H_{1n}( S_C; \theta )\right| +  \left| H_{2n}(\widehat S_C; \theta )-H_{2n}( S_C; \theta )\right| \right\} = o_{\mathbb P}(1).
$$
On the other hand, the uniform law of large numbers for the empirical process yields
$$
\sup_{\theta\in\Theta }\left\{ \; \left| H_{1n}(S_C; \theta )- \mathbb{E}\{ H_{1n}( S_C; \theta )\} \right| +  \left| H_{2n}(S_C; \theta )- \mathbb{E}\{ H_{2n}( S_C; \theta )\} \right| \; \right\} = o_{\mathbb P}(1).
$$
Lebesgue's Dominated Convergence Theorem implies
$$
\lim _{\theta\rightarrow\theta_0} \mathbb{E}\{ H_{1n}( S_C; \theta )\} = \mathbb{E}\{ H_{1n}( S_C; \theta_0 )\} = A(\theta_0)
$$
and
$$
\lim _{\theta\rightarrow\theta_0} \mathbb{E}\{ H_{2n}( S_C; \theta )\} = \mathbb{E}\{ H_{2n}( S_C; \theta_0 )\} = 0.
$$
Gathering facts, deduce
\begin{equation}\label{hess_neg}
\left\|  \frac{\partial^2 \widehat \ell^*  (\overline \theta) }{\partial \theta \partial \theta ^\top } -   \frac{\partial^2 \widehat \ell^*  ( \theta_0) }{\partial \theta \partial \theta ^\top } \right\| = o_{\mathbb P}(1),
\end{equation}
and this completes the justification of the representation of $\widehat \theta - \theta_0$. The convergence in law of $\sqrt{n}(\widehat \theta - \theta_0) $  is a direct consequence of this representation.

An alternative way to obtain (\ref{hess_neg}) is to require that condition (\ref{suff_cdt_ulln}) holds true and impose more regularity on the regression functions in the cure regression model. More precisely, assume that, in addition to (\ref{suff_cdt_ulln}), there exists an integrable function $C(X)$ and a constant  $a>0$ such that, for any $0\leq k\leq l\leq p$ and any $\theta,\theta^\prime\in\Theta$,
$$
\omega(x) \left|\frac{\partial^2 \pi}{\partial \theta_{(k)} \partial \theta_{(l)} }(x;\theta)  - \frac{\partial^2 \pi}{\partial \theta_{(k)} \partial \theta_{(l)} }(x;\theta^\prime ) \right|\leq C(x) \| \theta - \theta^\prime\|^a,
$$
and
$$
\omega(x)  \left| \left( \frac{\partial \pi}{\partial \theta_{(k)} }  \frac{\partial \pi}{\partial \theta_{(l)}} \right)  (x;\theta) -  \left( \frac{\partial \pi}{\partial \theta_{(k)} }  \frac{\partial \pi}{\partial \theta_{(l)}} \right)  (x;\theta^\prime )  \right|\leq C(x) \| \theta - \theta^\prime\|^a.
$$
Then property (\ref{hess_neg}) follows using arguments as in the proof of Lemma \ref{suff_cdt_ulln}.
\end{proof}

\quad

\begin{proof}[Proof of Theorem \ref{oracle}]
We follow along the lines of the proof of Theorem 4 in \citet{zou:2006}. First consider the asymptotic normality part. Let $\theta = \theta_0+ \mathbf{u}n^{-1/2}$ and define
$$
\Gamma_n (\mathbf{u}) =  \widehat \ell^*(\theta_0+ \mathbf{u}n^{-1/2}) -  \lambda \sum_{j=1}^p  w_j \left|\theta_{0,(j)} + \mathbf{u}_{(j)} n^{-1/2}\right|.
$$
Let  $\widehat {\mathbf{u}} = \arg\max_{\mathbf{u}}\Gamma_n (\mathbf{u}) ,$ such that  $\widehat {\mathbf{u}} = \sqrt{n} (\widehat\theta_\lambda - \theta_0).$ By Taylor expansion 
$$
\Gamma_n (\mathbf{u}) - \Gamma_n (\mathbf{0}) = A_1^{(n)}+ A_2^{(n)}+A_3^{(n)}+A_4^{(n)},
$$
where
$$
A_1^{(n)}=  \frac{1}{\sqrt{n}}\frac{\partial \widehat \ell^*  (\theta_0)}{\partial \theta}^\top \mathbf{u}, \qquad
A_2^{(n)}=  \frac{1}{2n} \mathbf{u}^\top \frac{\partial^2 \widehat \ell^*  (\theta_0) }{\partial \theta \partial \theta ^\top }  \mathbf{u}, \qquad
$$
$$
A_3^{(n)}= \frac{1}{2n} \mathbf{u}^\top \left[  \frac{\partial^2 \widehat \ell^*  (\overline \theta) }{\partial \theta \partial \theta ^\top } -   \frac{\partial^2 \widehat \ell^*  ( \theta_0) }{\partial \theta \partial \theta ^\top } \right]  \mathbf{u},
$$
and
$$
A_4^{(n)} = -  \frac{\lambda}{\sqrt{n}} \sum_{j=1}^p  w_j \sqrt{n} \left( \left|\theta_{0,(j)} + \mathbf{u}_{(j)} n^{-1/2}\right| - \left|\theta_{0,(j)} \right|\right).
$$
The behaviour of  $A_1^{(n)}$, $A_2^{(n)}$ and $A_3^{(n)}$ can be derived from the proof of Theorem \ref{prop_tcl}. On the other hand, for $A_4^{(n)}$, by the same arguments as in the proof of Theorem 4 in \citet{zou:2006}, we have
$$
\frac{\lambda}{\sqrt{n}}  w_j \sqrt{n} \left( \left|\theta_{0,(j)} + \mathbf{u}_{(j)} n^{-1/2}\right| - \left|\theta_{0,(j)} \right|\right) \; \rightarrow  \; \left\{
  \begin{array}{lll}
      0 &  \text{if} &  \theta_{0,(j)} \neq 0  \\
      0 &  \text{if} &   \theta_{0,(j)} = 0 \text{ and } \mathbf{u}_{(j)} = 0 \\
-\infty &  \text{if} &  \theta_{0,(j)} = 0 \text{ and } \mathbf{u}_{(j)} \neq 0
  \end{array}
\right.,
$$
in probability. To summarize, for every $\mathbf{u}$, $\Gamma_n (\mathbf{u}) - \Gamma_n (\mathbf{0})$ converges in probability to the function
$$
H_n(\mathbf{u}) = \left\{
  \begin{array}{lll}
        W_{n,\mathcal{A}}\mathbf{u}_{\mathcal{A}} -  \mathbf{u}_{\mathcal{A}} ^\top A_{\mathcal{A}}(\theta_0) \mathbf{u}_{\mathcal{A}} / 2&  \text{~~~if} &   \mathbf{u}_{(j)} = 0  \; \forall j \not \in \mathcal{A} \\
-\infty &  \text{~} &   \text{ otherwise }
  \end{array}
\right. ,
$$
where
$$
W_{n,\mathcal{A}} = \frac{1}{\sqrt{n}} \sum_{i=1}^n  \left\{ \mu_\mathcal{A}(Y_i, \Delta_i, X_i;\theta_{0})+ \mu_{\mathcal{A},C}^\varphi (Y_i,\Delta_i, X_i;\theta_{0}) \right\}  .
$$
The iid representation, and hence the asymptotic normality for $\mathbf{u}_{\mathcal{A}}$, follows. Moreover, we deduce that $\widehat {\mathbf{u}}_{\mathcal{A}^c}\rightarrow 0,$ in probability.

Next, we investigate the consistency part. It remains to show that $\forall j^\prime \not\in \mathcal{A}$, we have $\Pr(j^\prime \in \mathcal{A}_n) \rightarrow 0. $ Let us fix arbitrarily $j^\prime \in \mathcal{A}^c$ and consider the event $\{j ^\prime \in \mathcal{A}_n \}$. Let $\mathbf{e}_{j^\prime}\in\mathbb{R}^{p+1}$ be the vector with zero components, except the $j^\prime$th component that is equal to 1.
By the Karush-Kuhn-Tucker optimality conditions, we necessarily have
$$
 \mathbf{e}_{j^\prime} ^\top \; \frac{\partial  \widehat \ell^*(\widehat \theta_\lambda)}{\partial \theta} = \lambda  w_{j\prime }.
$$
Thus
$$
\Pr(j^\prime \in \mathcal{A}_n)\leq  \Pr\left(\mathbf{e}_{j^\prime} ^\top \;   \frac{\partial  \widehat \ell^*(\widehat \theta_\lambda)}{\partial \theta} = \lambda w_{j\prime } \right) .
$$
Next, by Taylor expansion, we {\red can} decompose
$$
\frac{1}{\sqrt{n}}   \mathbf{e}_{j^\prime} ^\top \; \frac{\partial  \widehat \ell^*(\widehat \theta_\lambda)}{\partial \theta} = B_1^{(n)}+ B_2^{(n)}+B_3^{(n)},
$$
with
$$
B_1^{(n)} = \frac{1}{\sqrt{n}} \mathbf{e}_{j^\prime} ^\top \; \frac{\partial  \widehat \ell^*( \theta_0 )}{\partial \theta} , \qquad
B_2^{(n)} = \frac{1}{n} \mathbf{e}_{j^\prime} ^\top \; \frac{\partial^2 \widehat \ell^*( \theta_0 )}{\partial \theta\partial \theta^\top }  \sqrt{n} ( \widehat\theta_{\lambda} - \theta_0),
$$
and
$$
B_3^{(n)} = \frac{1}{n} \mathbf{e}_{j^\prime} ^\top \left[ \frac{\partial^2 \widehat \ell^*( \overline \theta_\lambda )}{\partial \theta\partial \theta^\top }  - \frac{\partial^2 \widehat \ell^*( \theta_0 )}{\partial \theta\partial \theta^\top } \right] \sqrt{n} ( \widehat\theta_{\lambda} - \theta_0) ,
$$
with $ \overline \theta_\lambda$ between $\widehat \theta_\lambda$ and $\theta_0$. By arguments that we have already used, we obtain
$$
B_1^{(n)} = O_{\mathbb{P}}(1),\quad B_2^{(n)} = O_{\mathbb{P}}(1) \quad \text{and} \quad B_3^{(n)} = o_{\mathbb{P}}(1).
$$
Meanwhile, since $\sqrt{n}(\widehat \theta - \theta_0)= O_{\mathbb{P}}(1)$ and $\lambda n^{(\gamma-1)/2} \rightarrow \infty$,
$$
\frac{1}{\sqrt{n}} \; \lambda  w_{j\prime }  =  \frac{\lambda  }{\sqrt{n}} n^{\gamma/2} \frac{1}{\left|\sqrt{n} \widehat\theta_{(j^\prime)} \right|^\gamma} \rightarrow \infty,\qquad \text{in probability}.
$$
Thus, $\Pr(j ^\prime \in \mathcal{A}_n)\rightarrow 0$ {\red which completes the proof}. \end{proof}

\subsection{Uniform convergence and iid representations: examples\label{sec:unifiid}}

In this section we review several approaches for estimating the conditional law of the censoring time used for IPCW in such way that Assumption \ref{ulln_ass} and \ref{uclt_ass2} hold true.  Namely, we consider nonparametric estimators, such as the Kaplan-Meier and the conditional Kaplan-Meier estimators, and
existing estimators in semiparametric models such as proportional hazards, proportional odds and transformation models. For all these models, we propose a guideline to derive iid representations under our assumptions from the existing asymptotic results.

Without loss of generality, we consider the case of a real-valued   function $\varphi(X)$. Recall that we are interested in functions $\varphi$ such that $\omega(x) = 0$ implies $\varphi(x)=0.$ \color{black}In particular this implies
$$
 \varphi (x)= \ind(\omega(x)>0)  \varphi (x), \quad \forall x.
$$\color{black}
For vector-valued functions it suffices to apply the results presented below for each component. For simplicity, we assume that the sample space $\mathcal{X}$ is a subset of a finite-dimensional space.  Moreover, we assume  $\Pr(T=C)=0$ and there exists $\tau\in\mathbb{R}$ such that $\Pr(T_0 > \tau) = 0 $ and $\inf_{x\in\mathcal{X},\;\omega(x)>0 }S_C(\tau \mid x) >0. $

%

\subsubsection{Kaplan-Meier inverse probability weighting}
When the law of the censoring variable $C$ does not depend on the covariates, the survivor function $S_C$ {\red can} be estimated by the Kaplan-Meier estimator
\begin{multline*}
 \widehat{S}_C \left( y\right) =\prod\limits_{j:Y_{j}\leq y}\left(1 - \frac{1}{n \{1- \widehat H (Y_j-)\}}\right) ^{1-\Delta _{j}}, \\ \text{with }\quad n \{1- \widehat H (y-)\}  = \sum_{j=1}^n \ind(Y_j \geq t).
\end{multline*}
Here $\widehat H (y-)$ is {\red the} left-sided limit of the estimate of marginal distribution function of the observed lifetimes $H (y)=\Pr(Y\leq y)$, $y\in\mathbb{R}$. Then the uniform convergence condition (\ref{suff_c1}), and thus Assumption \ref{ulln_ass}, is guaranteed by the uniform law of large numbers of \citet{stute1993}.

For Assumption  \ref{uclt_ass2}, one can use the  iid representation for Kaplan-Meier integrals, as stated in Theorem 1.1 of \citet{stute96}.
More precisely,  for each  $y\in\mathbb{R},$ define
$$
\gamma_0(y) =  S_C(y-)^{-1},  \qquad \gamma_1 (y) = \frac{1}{1-H(y)} \int \ind(y < t  ) \gamma_0(t) \varphi (x)  H^{11} (dx,dt) ,
$$
and
$$
 \gamma_2(y) =  \int \int \frac{\ind(s < y, s < t) \gamma_0(t)   }{\{1-H(s)\}^2} \varphi (x)  H_0(ds)  H^{11} (dx,dt)
$$
with $H^{11} (x,y) = \Pr(X\leq x, Y\leq y, \Delta=1).$ Note that $H^{11} (x,y)=H^{11} (x,\tau)$, $\forall y> \tau$.  Then Assumption  \ref{uclt_ass2} holds with
\begin{equation*}
\mu_C^\varphi (Y_i,\Delta_i,X_i) =   (1-\Delta_i)\gamma_1(Y_i) - \gamma_2(Y_i) ,\qquad 1\leq i\leq n.
\end{equation*}

\subsubsection{Conditional Kaplan-Meier inverse probability weighting}

The conditional Kaplan-Meier estimator, also called  Beran estimator \citep{beran:1981} is defined as
\begin{equation*}
\widehat S_C \left(y \mid x\right) = \large{\prod_{Y_i\leq y }} \left( 1 - \frac{\widehat w_{in}(x)}{\sum_{j=1}^n \widehat w_{jn}(x)\ind(Y_j \geq Y_i) }\right)^{1-\Delta_i},
\end{equation*}
where
$$
\widehat w_{in}(x) = \frac{K((X_i - x))/b_n))}{\sum_{j=1}^n  K(( X_j - x))/b_n)},\qquad x\in\mathcal{X} .
$$
Here, $b_n$ is a bandwidth sequence and $ K(\cdot)$ is a multivariate kernel function. The uniform law of large numbers {\red for} this conditional Kaplan-Meier estimator was established in Corollary 2.1 of \citet{dabrowska89} under some regularity conditions on the density of $X$ and the functions $x\mapsto \Pr(Y\leq y ,\Delta=j\mid X=x),$ $j=0,1$, $y\leq \tau$. To apply that corollary in our framework, one has to define a set on which the density of the covariate vector stays away from zero and to take the weight function  $\omega(\cdot)$ in the definition of our estimators $\widehat \theta$ and $\widehat \theta_{\lambda}$ equal to zero outside this set.

The iid representation {\red for} Kaplan-Meier integrals was extended to conditional Kaplan-Meier integrals, see \citet{lopez2011}. However, {\red this} purely nonparametric approach suffers from the {\red curse} of dimensionality when the sample space $\mathcal{X}$ is multidimensional. Stronger regularity assumptions and high-order kernels are needed in such cases. For simplicity, following \citet{lopez2011}, assume that $C\perp X\mid Z$ where $Z=g(X)\in\mathbb{R}$ with $g(\cdot)$ a given function. For instance, $g(X)$ \color{black} could be a component of $X$, or a given linear combination of components of $X$.  The case where $g(\cdot)$ is known up to a finite-dimensional parameter {\red which} has to be estimated could be also considered, but would introduce an additional term in the iid representation in Assumption \ref{uclt_ass2} {\red which} takes into account the estimation of $g(\cdot)$. For simplicity, herein we assume that $g(\cdot)$ is given. Moreover, for some small $\delta>0$, the weight function $\omega(\cdot)$
vanishes outside the set $\mathcal{X}_\delta = \{x\in\mathcal{X}: f_g (g(x)) \geq  \delta+ \inf_{z}f_g (z)\}$. Here, $f_z$ denotes the density of $Z=g(X),$ {\red which we assume exists and satisfies} some differentiability conditions.
Then, {\red the}  Beran estimator with $Z_i=g (X_i),$ $1\leq i \leq n,$
 is defined as
\begin{equation*}
\widehat S_C \left(y \mid z\right) = \large{\prod_{Y_i\leq y }} \left( 1 - \frac{\widehat w_{in}(z)}{\sum_{j=1}^n \widehat w_{jn}(z)\ind(Y_j \geq Y_i) }\right)^{1-\Delta_i}, \qquad z\in\mathbb{R},
\end{equation*}
where
$$
\widehat w_{in}(z) = \frac{K((Z_i - z))/a_n))}{\sum_{j=1}^n  K(( Z_j - z))/a_n)}.
$$
Here, $a_n$ is a bandwidth sequence converging to zero as $n$ tends to infinity, and $ K(\cdot)$ is a univariate kernel function. If the bandwidth $a_n$ satisfies $\log(n)n^{-1}a_n^{-3} \rightarrow 0$ and $na_n^4\rightarrow 0,$
\begin{multline*}
\frac{1}{n} \sum_{1\leq i\leq n} \left[ \frac{\Delta_i }  {\widehat S_C(Y_i- \mid  Z_i)}  - \frac{\Delta_i }  {S_C(Y_i-\mid  Z_i)} \right] \varphi (X_i) \\= \frac{1}{n} \sum_{1\leq i\leq n} \mu_C^\varphi (Y_i,\Delta_i, X_i) + o_{\mathbb{P}}(n^{-1/2}) ,
\end{multline*}
where
\begin{equation}\label{mu_C_KMc}
\mu_C^\varphi  (Y_i,\Delta_i,X_i;\theta_0) =   (1-\Delta_i)\gamma_1(Y_i, Z_i) - \gamma_2(Y_i,Z_i) ,\qquad 1\leq i\leq n,
\end{equation}
with
\begin{align*}
\gamma_0(y,z) &=  S_C(y\mid z )^{-1},  \\
\gamma_1 (y,z) &= \frac{1}{1-H(y\mid z)} \int \ind(y < t)\gamma_0(t,z)  \varphi(x)  H^{11} (dx,dt\mid z) ,
\end{align*}
and
$$
\gamma_2(y,z) =  \int \int \frac{\ind(s < y, s < t ) \gamma_0(t,z)   }{\{1-H(s\mid z)\}^2} \varphi(x)  H_0(ds\mid z )  H^{11} (dx,dt\mid z ).
$$
Here, $H_0(y\mid z )=\Pr(Y\leq y, \delta=0\mid g(X)=z)$ and $H^{11} (x,y\mid z) = \Pr(X\leq x, Y\leq y, \delta=1\mid g(X)=z).$

\subsubsection{Semiparametric models}

In this section we present a general method for guaranteeing Assumption \ref{uclt_ass2}. This method {\red can, for example,} be applied to the common semiparametric models used in survival analysis. (For all the models we mention, condition (\ref{suff_c1}) is obviously satisfied, and thus Assumption \ref{ulln_ass}.) {\red Note that what we present here is a fuller version of the sketch which appears in Section \ref{sec:normal} of the main paper, and, thus, there is some repetition which we maintain to improve readability herein.} Assume that the conditions of Lemma \ref{suff_cdt_ulln} hold true and $\varphi(\cdot)$ is bounded. We can write
\begin{multline*}
\frac{1}{n} \sum_{1\leq i\leq n} \left[ \frac{\Delta_i }  {\widehat S_C(Y_i- \mid  X_i)}  - \frac{\Delta_i }  {S_C(Y_i-\mid  X_i)} \right] \varphi (X_i) \\=   \frac{1+ o_{\mathbb{P}}(1)}{n} \sum_{1\leq i\leq n} \left\{q(Y_i,\Delta_i,X_i;  S_C) - q(Y_i,\Delta_i,X_i; \widehat S_C)  \right\}
\end{multline*}
where for any function $S$ depending on $y$ and $x$ that is c\`adl\`ag in $y$,
$$
q(t,d,x; S) = d \frac{\varphi (x)}{S^2_C(t-\mid x)} \times  S(t-, x).
$$
Then Assumption \ref{uclt_ass2} {\red can} be guaranteed in two steps. First,  use the equicontinuity of Donsker classes and  transform the sum with respect to $1\leq i\leq n$ {\red to} an expectation with respect to a generic triplet $(Y,\Delta,X)$ given the sample. Next, use the iid representation of $\widehat S_C(y-\mid  x)$ which can be derived in common semiparametric models. For the first step, in the following, we introduce a general class of functions
for which we prove the Donsker property. For several common semiparametric models and estimators, it can be shown that  $q(\cdot,\cdot,\cdot; S_C)$ and $q(\cdot,\cdot,\cdot; \widehat S_C)$ belong to this class with probability tending to 1. For other models, it may be necessary to define alternative Donsker classes to be used in the first step of the method we propose herein.

Let us suppose that $X$ is composed of continuous and discrete components, that is $X = (X_d^\top,X_c^\top )^\top\in\mathcal{X}_d\times \mathcal{X}_c\subset \mathbb{R}^{p_d}\times \mathbb{R}^{p_c}.$ Then, each vector $x$ in the support of $X$ {\red can} be split in the subvectors  $x_d\in \mathcal{X}_d$ and $x_c\in \mathcal{X}_c.$ For simplicity, assume that the support $\mathcal{X}_d$ 
 is finite.

Let $BV_M [0,\tau]$ be the set of all real-valued c\`adl\`ag functions defined on $[0,\tau]$ with the total variation bounded by $M$. Let $\mathcal{C}_M^\alpha (\mathcal{X}_c) $ {\red be} the set of all continuous functions $f:\mathcal{X}_c \rightarrow \mathbb{R}$ with $\|f\|_\alpha \leq M,$ where  $\|\cdot\|_\alpha$ is the usual uniform norm defined on the class of functions with uniformly bounded partial derivatives up to order $\underline{\alpha}$ (the greatest integer smaller than $\alpha$) and whose highest partial derivatives are Lipschitz of order $\alpha - \underline{\alpha}$. See chapter 2.7.1 in \citet{vaartwellner96book}.
We follow \citet{lopez2011} and, for $\alpha,M >0$,  define the following class of functions defined on $[0,\tau]\times \mathcal{X}$~:
\begin{multline*}
\mathcal{G} = \left\{ (t,x)\mapsto h(t,x):  \forall x, h(\cdot,x) \text{ and } D^{\mathbf{k}} h(\cdot,x)\in BV_M [0,\tau], \right.\\ \left. \text{ and } \forall t\leq \tau, \forall x_d\in \mathcal{X}_d,  h(t,(x_d,\cdot)) \in\mathcal{C}_M^\alpha (\mathcal{X}_c)   \right\},
\end{multline*}
where for any vector $\mathbf{k} = (k_1,\ldots,k_{p_c})$ of $p_c$ integers, $D^{\mathbf{k}} $ is the differential operator
$$
D^{\mathbf{k}} = \frac{\partial ^{k_1+\cdots+k_{p_c}}}{\partial x_{c,1}\cdots \partial x_{c,p_c}}.
$$
The next result shows that $\mathcal{G}$ is a general Donsker class. For this it suffices to show that $\mathcal{G}$ has finite bracketing integral, which here is tantamount to
$
\int_{0}^1 \sqrt{\log N_{[\; ]}(\varepsilon, \mathcal{G}, L_2)} \; d\varepsilon<\infty,
$
where $N_{[\; ]}(\varepsilon, \mathcal{G}, L_2)$ {\red is} the $\varepsilon-$bracketing number of the class $\mathcal{G}$ with respect to the $L_2-$norm.

\begin{lemma}\label{lem_donsk}
Assume that $\mathcal{X}_c$ is a bounded, open and convex subset of $\mathbb{R}^{p_c}$ and $\mathcal{X}_d$ is finite. If $M<\infty$ and $\alpha > p_c$,
$\mathcal{G}$ is a  Donsker class.
\end{lemma}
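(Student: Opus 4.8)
The plan is to bound the $L_2$-bracketing entropy of $\mathcal{G}$ and verify the finite bracketing integral $\int_0^1 \sqrt{\log N_{[\;]}(\varepsilon, \mathcal{G}, L_2)}\,d\varepsilon < \infty$, which (as noted just above the statement) is sufficient for the Donsker property by the bracketing central limit theorem (e.g.\ Theorem~2.5.6 in \citet{vaartwellner96book}). The construction combines the two smoothness mechanisms built into $\mathcal{G}$: bounded variation in $t$ of $h$ \emph{and} of its $x_c$-derivatives, and H\"older-$\alpha$ smoothness in $x_c$.

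First I would dispose of the discrete component. Since $\mathcal{X}_d$ is finite, a bracket for $h$ over $[0,\tau]\times\mathcal{X}$ is obtained by independently bracketing each of the finitely many slices $h(\cdot,(x_d,\cdot))$, and the decomposition of the $L_2(\mathbb{P})$-norm over the finite set $\mathcal{X}_d$ gives $\log N_{[\;]}(\varepsilon,\mathcal{G},L_2) \lesssim \log N_{[\;]}(\varepsilon,\mathcal{G}_{x_d},L_2)$ up to a constant factor depending only on $|\mathcal{X}_d|$, where $\mathcal{G}_{x_d}$ denotes the class of slices at a fixed $x_d$. Thus it suffices to bound the entropy of a single $\mathcal{G}_{x_d}$.

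The core step is the mixed-smoothness bracket construction, following \citet{lopez2011}. Fix a grid of points $\{x_c^{(j)}\}$ covering $\mathcal{X}_c$ with mesh $\eta$; since $\mathcal{X}_c$ is bounded there are $N_\eta \lesssim \eta^{-p_c}$ of them. For each grid point, the H\"older condition lets one approximate $g(t,x_c) := h(t,(x_d,x_c))$ on the surrounding cell by its $x_c$-Taylor polynomial of degree $\underline{\alpha}$ centred at $x_c^{(j)}$, with uniform remainder $\lesssim M\eta^\alpha$; the polynomial coefficients are the slices $t\mapsto D^{\mathbf{k}} g(t,x_c^{(j)})$, $|\mathbf{k}|\le\underline{\alpha}$, which by assumption lie in $BV_M[0,\tau]$. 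Each such coefficient slice can be bracketed in $L_2$ to precision $\delta$ using the bounded-variation bound $\log N_{[\;]}(\delta, BV_M[0,\tau], L_2) \lesssim \delta^{-1}$ (monotone functions satisfy this by Theorem~2.7.5 in \citet{vaartwellner96book}, and a $BV_M$ function is a difference of two monotone ones). Assembling upper and lower envelopes cellwise --- choosing, for each monomial, the upper or lower coefficient bracket according to the sign of $(x_c - x_c^{(j)})^{\mathbf{k}}$ and adding an $O(M\eta^\alpha)$ remainder margin --- produces genuine brackets for $g$ whose $L_2$-width is $\lesssim \delta + M\eta^\alpha$ (the monomial factors are $\le 1$ on each cell, and there are only $\binom{\underline{\alpha}+p_c}{p_c}$ of them). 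A global bracket is determined by the finitely many coefficient-bracket choices at the $N_\eta$ grid points, so
\begin{equation*}
\log N_{[\;]}(\varepsilon, \mathcal{G}_{x_d}, L_2) \;\lesssim\; N_\eta \cdot \log N_{[\;]}(\delta, BV_M, L_2) \;\lesssim\; \eta^{-p_c}\,\delta^{-1}.
\end{equation*}

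Finally I would balance the two error sources: taking $\eta \sim \varepsilon^{1/\alpha}$ and $\delta \sim \varepsilon$ makes the bracket width $\lesssim \varepsilon$ and yields $\log N_{[\;]}(\varepsilon,\mathcal{G},L_2) \lesssim \varepsilon^{-(1+p_c/\alpha)}$. Hence
\begin{equation*}
\int_0^1 \sqrt{\log N_{[\;]}(\varepsilon,\mathcal{G},L_2)}\,d\varepsilon \;\lesssim\; \int_0^1 \varepsilon^{-(1+p_c/\alpha)/2}\,d\varepsilon,
\end{equation*}
which is finite precisely when $(1+p_c/\alpha)/2 < 1$, i.e.\ when $\alpha > p_c$ --- exactly the stated hypothesis --- and the Donsker property follows. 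The main obstacle is the mixed-smoothness construction: one must verify that assembling the $BV$ brackets of the Taylor coefficients across cells, together with the H\"older remainder margin, actually yields valid order-preserving upper and lower envelopes for every $g\in\mathcal{G}_{x_d}$ with the claimed $L_2$-width. This is precisely where the bounded variation of the \emph{derivative} slices $D^{\mathbf{k}}h(\cdot,x)$, and not merely of $h$ itself, is essential.
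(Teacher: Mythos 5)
Your proof is correct and follows essentially the same route as the paper's: reduce to a single value of $x_d$ using finiteness of $\mathcal{X}_d$, then bracket the mixed-smoothness class by combining the H\"older-class grid discretization in $x_c$ with brackets for the $BV_M[0,\tau]$ slices of the derivatives at the grid points, and verify the finite bracketing integral. The only real difference is presentational --- the paper gets its entropy bound $\varepsilon^{-2p_c/\alpha}\log(1/\varepsilon)$ by modifying the proof of Theorem 2.7.1 of van der Vaart and Wellner, whereas your explicit Taylor-polynomial assembly yields $\varepsilon^{-(1+p_c/\alpha)}$; both exponents fall below the critical value $2$ precisely when $\alpha>p_c$, so the two arguments coincide in substance and conclusion.
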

\begin{proof}[Proof of Lemma \ref{lem_donsk}]
We provide a sketch proof. The idea is to show that
\begin{equation}\label{b22}
\log N_{[\; ]}(\varepsilon, \mathcal{G}, L_\infty)\leq C (1/\varepsilon)^{2p_c/\alpha} \log(1/\varepsilon) ,
\end{equation}
for some constant $C$. Since $\int_0^1 (1/\varepsilon)^{p_c/\alpha}\log^{1/2}(1/\varepsilon)  d\varepsilon <\infty$ when $p_c/\alpha <1$, this will guarantee that $\mathcal{G}$ has finite bracketing integral.
With a finite set $\mathcal{X}_d$, by the permanence of the Donsker property (see Theorem 2.10.6 in \citet{vaartwellner96book}), it suffices to show property (\ref{lem_donsk}) separately for each value  $x_d\in\mathcal{X}_d.$ In other words, it suffices to consider that $x=x_c$.

First, for any $t\in[0,\tau],$ let $\mathcal{G}_t = \{ x\mapsto h(t,x) : h\in\mathcal{G} \}.$ Then, by Corollary 2.6.2 of \citet{vaartwellner96book}, we have {\red that} $\log N_{[\; ]}(\varepsilon, \mathcal{G}_t   , L_2)\leq C_1 \varepsilon^{-p_c/\alpha},$ for each $t\in[0,\tau]$, {\red where $C_1$ is} a constant depending only on $M$, $\alpha,$ $\text{diam}(\mathcal{X}_c)$ and $p_c$ (and {\red is } independent of $t$). Meanwhile, $\log N_{[\; ]}(\varepsilon, BV_M [0,\tau], L_\infty) \leq C_2 (1/\varepsilon) \log(1/\varepsilon),$ where $C_2$ is a constant depending only on $M$ and $\tau$. This is a simple bound for the bracketing number, {\red and can} be obtained by {\red constructing brackets which} are piecewise constant on a regular grid. See also the beginning of the proof of Theorem 2.7.5 in \citet{vaartwellner96book}. In the following, without loss of generality, we consider $M=1$. To prove (\ref{b22}), it suffices to reconsider the proof of Theorem 2.7.1 (and Corollary 2.7.2) of \citet{vaartwellner96book}. More precisely, in the the proof of Theorem 2.7.1,  replace the entries of the matrices $Af$ by functions of $t$ {\red with} total variation bounded by $M$. In \citet{vaartwellner96book}, the entries of $Af$ are defined for $f$ ranging over $\mathcal{C}_1^\alpha (\mathcal{X}_c)$ and having the rows built using the values of the partial derivatives of $f$, up to order equal to the greatest integer strictly smaller than $\alpha$,  discretized on a grid with the mesh controlled by $\varepsilon$.  In our case, such entries {\red may} depend also on $t$ and be taken as a set of brackets covering $BV_M [0,\tau]$. Deduce a set of brackets that cover $\mathcal{G}$ and the cardinality of the set of brackets is of order
$$
\exp\left(\{\varepsilon^{-1/\alpha} +\ldots +\varepsilon^{-p_c/\alpha}  \}\log(1/\varepsilon)  \right)^{\varepsilon^{-p_c/\alpha}}.
$$
Taking the logarithm of the above, we recover the bound $(1/\varepsilon)^{2p_c/\alpha}\log(1/\varepsilon) $ for the order of the bracketing entropy of $\mathcal{G}$. \end{proof}

\quad

As a first step for guaranteeing Assumption \ref{uclt_ass2}, using the properties of the model and estimator considered, one {\red can} check that
\begin{equation}\label{donsk}
S_C\in  \mathcal{G} \quad \text{and} \quad \Pr (\widehat S_C\in\mathcal{G})\rightarrow 1
\end{equation}
in order to deduce, using the permanence properties of Donsker classes,  that $q(\cdot,\cdot,\cdot; S_C)$ and $q(\cdot,\cdot,\cdot; \widehat S_C)$ belong to a Donsker class with probability tending to 1. By the asymptotic equicontinuity of Donsker classes (see \citet{vaartwellner96book}), and our Lemma \ref{core_id},
\begin{multline*}
\frac{1}{n} \sum_{1\leq i\leq n} \left[ \frac{\Delta_i }  {\widehat S_C(Y_i- \mid  X_i)}  - \frac{\Delta_i }  {S_C(Y_i-\mid  X_i)} \right] \varphi (X_i) \\ =  \int\int \mathbb{G}_n (t,x)  
\varphi (x)\{1-\pi (x)\} dF(x,t) + o_{\mathbb{P}}(n^{-1/2}),
\end{multline*}
where
$$
\mathbb{G}_n (t,x) = \frac{S_C(t-\mid x) - \widehat S_C(t-\mid x)}{S_C(t-\mid x)},\qquad t\in [0,\tau], x\in\mathcal{X},
$$
and $F(x,t) = \Pr(X\leq x, T_0\leq t)$.
In the second step, it remains to show that
\begin{equation}\label{iid_mod_semi}
\text{the process } \; \mathbb{G}_n (\cdot,\cdot) \; \text{ admits an iid representation}
\end{equation}
using the properties of the model considered for $S_C$.

In the following we \color{black} study conditions \color{black} (\ref{donsk}) and (\ref{iid_mod_semi}) in several common (semi)parametric models in survival analysis. See \citet{GuoZeng12} for an illuminating survey on  semiparametric models.
In these models, the properties of $\widehat S_C$ are usually derived from the properties of the conditional cumulative hazard function of $C$.  Note that by the Duhamel identity \citep{gill1990} 
we have
\begin{multline*}
\mathbb{G}_n (t,x) = -\int_{\color{black} (0,t) \color{black}} \frac{\widehat S_C(s-\mid x) }{S_C(s\mid x)} d(\widehat\Lambda_C - \Lambda_C)(s\mid x) \\ = -\{1+o_{\mathbb{P}}(1)\} \int_{\color{black} (0,t) \color{black}} \frac{ S_C(s-\mid x) }{S_C(s\mid x)} d(\widehat\Lambda_C - \Lambda_C)(s\mid x), \qquad t\leq \tau,
\end{multline*}
where $\Lambda_C(\cdot \mid x)$ is the conditional cumulative hazard function of $C$ given $X=x$ and $\widehat \Lambda_C(\cdot \mid x)$ is the estimator of $\Lambda_C(\cdot \mid x)$ in the model for $S_C$. When $S_C(\cdot \mid x)$ is continuous,
 $$
 \mathbb{G}_n (t,x) =  - \left\{ \widehat\Lambda_C (t\mid x) - \Lambda_C(t\mid x) \right\} \{1+o_{\mathbb{P}}(1)\}.
 $$

\textbf{Parametric models.} One {\red can} achieve flexible modeling for $S_C$ using parametric models, such as the Weibull model, where the parameters are replaced by functions of $X$ depending on some unknown vector of coefficients $\beta_C$. Then, in general, all the survivor functions in the model satisfy the regularity conditions defining the class $\mathcal{G}$, so that condition (\ref{donsk}) is automatically met. Condition (\ref{iid_mod_semi}) follows by Taylor expansion and the asymptotic linear expansions of $\widehat \beta_C$, the estimator of $\beta_C$ considered. Such asymptotic linear expansions for $\widehat \beta_C$, based on the so-called \emph{influence functions}, are available for all common estimators $\widehat\beta_C$.

\textbf{Cox's proportional hazard model.} In this case
$$
\Lambda_C (t\mid x)= \exp(x^\top \beta_C) \Lambda_{0,C}(t),
$$
where $\Lambda_{0,C}(\cdot) $ is the so-called baseline cumulative hazard function. Clearly, $\Lambda_C (t\mid x)$ belongs to $\mathcal{G}$. Any estimate of $\Lambda_C (t\mid x)$ belongs to $\mathcal{G}$ {\red once} $\Lambda_{0,C}(\cdot) $ is estimated by a function {\red with} total variation  bounded by some suitable $\widetilde M$ depending on $M$, the compact parameter set for $\beta_C$, and the bounded $\mathcal{X}$.
\color{black} Thus, condition (\ref{donsk}) is easily granted. \color{black}
Next, for any consistent estimator $\widehat\beta_C$ and uniformly consistent $\widehat \Lambda_{0,C}(\cdot) $ we {\red can} write
\begin{multline*}
\mathbb{G}_n (t,x) = - \left[  \exp(x^\top \beta_C) \left\{ \widehat\Lambda_{0,C} (t) - \Lambda_{0,C} (t) \right\} + \Lambda_C (t\mid x) (\widehat \beta_C - \beta_C ) \right]\\\times  \{1+o_{\mathbb{P}}(1)\}.
\end{multline*}
In the case where $\widehat \beta_C$ is the maximum partial likelihood estimator, and $ \widehat\Lambda_C$ is the associated Breslow estimator, condition (\ref{iid_mod_semi}) follows by the asymptotic results of \citet{andersen1982}.

\textbf{Transformation model.} We consider the class of transformation models investigated by \citet{ZhengLin06}, where
$$
\Lambda_C (t\mid x) = G\left( \exp(x^\top \beta_C) \Lambda_{0,C}(t) \right),
$$
where $G(\cdot)$ is a given smooth, strictly increasing transformation function with $G(0)=0,$ $G^\prime (0) >0$ and $G(\infty)=\infty$. Box-Cox transformations, and log transformations are possible examples.
\citet{ZhengLin06} extend the partial likelihood idea and the Breslow estimator from Cox's model and
introduce the estimators $\widehat \beta_C$ and $\widehat\Lambda_{0,C}$ for which they derive a Gaussian limit. Since
\begin{multline*}
\widehat \Lambda_C (t\mid x) - \Lambda_C (t\mid x)  = G\left( \exp(x^\top \widehat \beta_C) \widehat \Lambda_{0,C}(t) \right)
- G\left( \exp(x^\top \beta_C)  \Lambda_{0,C}(t) \right)\\
= G^\prime \left(  \exp(x^\top \beta_C)  \Lambda_{0,C}(t) \right) \left\{ \exp(x^\top \widehat \beta_C) \widehat \Lambda_{0,C}(t) -  \exp(x^\top \beta_C)  \Lambda_{0,C}(t)  \right\}  \\ \times  \{1+o_{\mathbb{P}}(1)\},
\end{multline*}
we {\red can} easily adapt the arguments used for the case of Cox's model and, based on definition of $\widehat \beta_C$ and $\widehat \Lambda_{0,C}(\cdot)$ and  the asymptotic results of \citet{ZhengLin06}, guarantee conditions (\ref{donsk}) and (\ref{iid_mod_semi}).

\textbf{Proportional odds model.} In such a model we have
$$
S_C(t\mid x)  = \frac{\exp(-x^\top \beta_C)}{H(t)+ \exp(-x^\top \beta_C)},
$$
with $H(\cdot)$ some c\`adl\`ag function with $H(0)=0$.
See \citet{Murphy97}. Clearly, $S_C\in\mathcal{G}$ and $\widehat S_C\in\mathcal{G},$ with $\widehat S_C(t\mid x)  = \exp(-x^\top \widehat \beta_C)/\{\widehat H(t)+ \exp(-x^\top \widehat \beta_C)\}$, {\red once} $\widehat H(t)\in BV_{\widetilde M }[0,\tau]$ for some suitable $\widetilde M$. For the required iid representation, we {\red can} use the maximum likelihood estimators for $\beta_C$ and $H(\cdot)$, linearize the expression of $S_C$ with respect to the parameters and use the asymptotic representations of $\sqrt{n}(\widehat H - H)$ and $\sqrt{n}(\widehat \beta_C - \beta_C)$ established in the proof of the Theorem 2.2 in  \citet{Murphy97}. That result is established under an additional condition which, in our setup, means that $\Pr(T_0=\tau) = \Pr(T_0 \geq \tau ) >0$. This additional technical constraint is quite usual in the cure regression literature. See, for instance, \citet{Lu2008} and \citet{Fang2005}.

\subsection{Implementation of the alasso\label{sec:imp}}

\subsubsection{Optimization}

A variety of algorithms have been implemented for solving non-differentiable lasso problems, e.g., quadratic programming \citep{tibshirani:1996}, least angle regression (LARS) \citep{efronetal:2004}, and co-ordinate descent \citep{friedmanetal:2007}. However, we prefer the use of a differentiable penalty since standard gradient-based optimization procedures can then be utilized. Therefore, we propose the use of
\begin{equation}
\widehat\ell^*_{\lambda,\epsilon}(\theta) = \widehat\ell^*(\theta) - \lambda \sum_{j=1}^p  w_j a_\epsilon(\theta_{(j)}) \label{penlike}
\end{equation}
where $a_\epsilon(x) = (x^2 + \epsilon^2)^{1/2} - \epsilon$ is an extension of the absolute value function such that $\lim_{\epsilon \rightarrow0}a_\epsilon(x) = |x|$, and which is differentiable for $\epsilon>0$. Clearly,
smaller $\epsilon$ values bring the penalty closer to the alasso, but also bring (\ref{penlike}) closer to being non-differentiable.
In our work, we have found that $\epsilon = 10^{-4}$ works well.

\subsubsection{Tuning parameter selection}

For the purpose of selecting the tuning parameter, $\lambda$, we consider cross-valida\-tion; in particular, we aim to minimize the $k$-fold cross-validation error. Since we have $\expect\{B_i^*(S_C) - \pi_i(\theta)\mid X_i\} = 0$, we may define the error term, $B_i^*(\widehat S_C) - \pi_i(\widehat\theta)$. Then, for a partition $F_1,\ldots,F_K$ of the set $\{1,\ldots,n\}$, the mean-squared error for the $j$th fold, $F_j$ is given by $\sum_{i\in F_j}\{B_i^*(\widehat S_C) - \pi_i(\widehat\theta_\lambda^{-j})\}^2$ where $\widehat\theta_\lambda^{-j}$ is the penalized estimate with the $j$th fold removed. Thus, the $k$-fold cross-validation error is
\begin{align}
\text{CVE}(\lambda) = \frac{1}{k}\sum_{j=1}^{\text{k}}\sum_{i\in F_j}\left\{B_i^*(\widehat S_C) - \pi_i(\widehat\theta_\lambda^{-j})\right\}^2\label{cvmse}
\end{align}
where we use $k=10$ as is standard in practice.
 Minimizing (\ref{cvmse}) with respect to $\lambda$ can be achieved by profiling over a range of $\lambda$ values or by using a one-dimensional optimizer, e.g., golden search; we will define $\lambda^{\text{CVE}}$ to be the minimizer of  (\ref{cvmse}). {\red One might also consider a BIC-type criterion of the form $-2\widehat\ell^*(\widehat \theta_\lambda) + \text{dim}(\widehat \theta_\lambda)\log(n)$. We have also tested this in our simulation study where it produced very similar results to the cross-validation approach; thus, in Section \ref{sec:sim}, we only present cross-validation.}

\subsection{Additional simulation results \label{sec:simadd}}

Our simulations setup, described in Section \ref{sec:simsetup} of the main paper, comprises 24 scenarios, whereas, only the results for the 6 scenarios where $\pi_m=0.4$ and $\rho=0.1$ were reported on {\red in detail}. Therefore, here, all results are displayed in Table \ref{tab:resadd}. The estimates improve with the sample size (as seen in the main paper), disimprove with increased censoring proportion (controlled via $\rho$), and improve with an increased cure proportion, $\pi_m$. The estimates from our approach and \texttt{smcure} are similar, with \texttt{smcure} being slightly more efficient; of course, in the case where $\nu = 2$ (non-PH $S_{T_0}$) the \texttt{smcure} estimates are biased. {\red In Section \ref{sec:simselect}, we described a simulation setup with six covariates. However, our focus in that setting was on the performance of our variable selection procedure. For completeness, Table \ref{tab:resk6} displays the results of estimation in that setting which are quite similar to the case with two covariates (albeit with slightly increased bias and variability as expected).}

\begin{table}[!htbp]
\caption{Average bias and standard error (in brackets) of estimates\label{tab:resadd}}
\centering
\begin{footnotesize}
\begin{tabular}{cccc@{~~~~}c@{~~}c@{~~}c@{~~~~}c@{~~}c@{~~}c@{~~~~}c@{~~}c@{~~}c}
\hline
&&&& \multicolumn{3}{l}{\hspace{0.4cm}$n=100$} &  \multicolumn{3}{l}{\hspace{0.4cm}$n=300$} &  \multicolumn{3}{l}{\hspace{0.4cm}$n=1000$} \\
Method & $\nu$ & $\pi_m$ & $\rho$ & $\theta_{(0)}$ & $\theta_{(1)}$ & $\theta_{(2)}$ & $\theta_{(0)}$ & $\theta_{(1)}$ & $\theta_{(2)}$ & $\theta_{(0)}$ & $\theta_{(1)}$ & $\theta_{(2)}$ \\[0.1cm]
\hline
&&&&&&&&&&&&\\[-0.3cm]
Our
 & 0 &  0.2  & 0.1 &  -0.20 &   0.14 &   0.12 &  -0.06 &   0.05 &   0.03 &  -0.01 &   0.01 &   0.01 \\
proposal
 &   &       &     & (0.65) & (0.62) & (0.58) & (0.29) & (0.29) & (0.26) & (0.15) & (0.14) & (0.14) \\[0.05cm]
 &   &       & 0.2 &  -0.17 &   0.12 &   0.12 &  -0.06 &   0.05 &   0.05 &  -0.02 &   0.01 &   0.02 \\
 &   &       &     & (0.80) & (0.74) & (0.68) & (0.40) & (0.40) & (0.34) & (0.20) & (0.20) & (0.17) \\[0.1cm]
 &   &  0.4  & 0.1 &  -0.05 &   0.10 &   0.12 &  -0.01 &   0.04 &   0.03 &   0.00 &   0.01 &   0.01 \\
 &   &       &     & (0.36) & (0.49) & (0.47) & (0.18) & (0.24) & (0.22) & (0.10) & (0.12) & (0.12) \\[0.05cm]
 &   &       & 0.2 &  -0.03 &   0.16 &   0.17 &  -0.02 &   0.05 &   0.05 &   0.00 &   0.02 &   0.02 \\
 &   &       &     & (0.51) & (0.74) & (0.66) & (0.29) & (0.37) & (0.32) & (0.13) & (0.18) & (0.16) \\[0.1cm]
 & 2 &  0.2  & 0.1 &  -0.16 &   0.12 &   0.11 &  -0.05 &   0.04 &   0.03 &  -0.01 &   0.01 &   0.01 \\
 &   &       &     & (0.62) & (0.60) & (0.51) & (0.29) & (0.29) & (0.24) & (0.15) & (0.15) & (0.13) \\[0.05cm]
 &   &       & 0.2 &  -0.13 &   0.13 &   0.10 &  -0.09 &   0.07 &   0.06 &  -0.02 &   0.01 &   0.01 \\
 &   &       &     & (0.80) & (0.76) & (0.60) & (0.45) & (0.43) & (0.32) & (0.21) & (0.20) & (0.15) \\[0.1cm]
 &   &  0.4  & 0.1 &  -0.05 &   0.13 &   0.11 &  -0.02 &   0.04 &   0.02 &   0.00 &   0.01 &   0.01 \\
 &   &       &     & (0.37) & (0.53) & (0.44) & (0.19) & (0.25) & (0.21) & (0.10) & (0.13) & (0.11) \\[0.05cm]
 &   &       & 0.2 &  -0.03 &   0.15 &   0.15 &  -0.02 &   0.06 &   0.05 &  -0.01 &   0.02 &   0.01 \\
 &   &       &     & (0.55) & (0.77) & (0.60) & (0.30) & (0.41) & (0.27) & (0.15) & (0.20) & (0.13) \\[0.2cm]
\texttt{smcure}
 & 0 &  0.2  & 0.1 &  -0.20 &   0.12 &   0.08 &  -0.05 &   0.04 &   0.02 &  -0.02 &   0.01 &   0.01 \\
 &   &       &     & (0.54) & (0.48) & (0.46) & (0.25) & (0.23) & (0.22) & (0.14) & (0.12) & (0.12) \\[0.05cm]
 &   &       & 0.2 &  -0.40 &   0.20 &   0.14 &  -0.09 &   0.04 &   0.05 &  -0.03 &   0.02 &   0.02 \\
 &   &       &     & (0.93) & (0.72) & (0.68) & (0.34) & (0.30) & (0.27) & (0.17) & (0.15) & (0.14) \\[0.1cm]
 &   &  0.4  & 0.1 &  -0.08 &   0.09 &   0.09 &  -0.02 &   0.03 &   0.02 &  -0.01 &   0.01 &   0.01 \\
 &   &       &     & (0.33) & (0.40) & (0.38) & (0.17) & (0.20) & (0.19) & (0.09) & (0.11) & (0.10) \\[0.05cm]
 &   &       & 0.2 &  -0.25 &   0.20 &   0.12 &  -0.06 &   0.05 &   0.03 &  -0.01 &   0.02 &   0.01 \\
 &   &       &     & (0.64) & (0.68) & (0.61) & (0.26) & (0.27) & (0.25) & (0.12) & (0.14) & (0.13) \\[0.1cm]
 & 2 &  0.2  & 0.1 &  -0.16 &   0.11 &  -0.03 &  -0.03 &   0.02 &  -0.07 &   0.00 &   0.00 &  -0.09 \\
 &   &       &     & (0.54) & (0.48) & (0.42) & (0.25) & (0.23) & (0.20) & (0.13) & (0.12) & (0.11) \\[0.05cm]
 &   &       & 0.2 &  -0.18 &   0.12 &  -0.10 &  -0.02 &   0.02 &  -0.14 &   0.04 &  -0.02 &  -0.16 \\
 &   &       &     & (0.73) & (0.59) & (0.48) & (0.33) & (0.27) & (0.23) & (0.17) & (0.14) & (0.11) \\[0.1cm]
 &   &  0.4  & 0.1 &  -0.12 &   0.10 &  -0.09 &  -0.08 &   0.04 &  -0.13 &  -0.06 &   0.01 &  -0.13 \\
 &   &       &     & (0.35) & (0.40) & (0.33) & (0.19) & (0.21) & (0.17) & (0.10) & (0.11) & (0.09) \\[0.05cm]
 &   &       & 0.2 &  -0.22 &   0.15 &  -0.17 &  -0.12 &   0.05 &  -0.20 &  -0.09 &   0.01 &  -0.20 \\
 &   &       &     & (0.55) & (0.57) & (0.42) & (0.28) & (0.27) & (0.20) & (0.14) & (0.13) & (0.10) \\[0.05cm]
\hline
\end{tabular}
\end{footnotesize}
\end{table}

{\red As suggested by an anonymous reviewer, it is also of interest to consider simulation setups with binary covariates. Thus, we considered two further setups: one where $X_{(1)}$ and $X_{(2)}$ are both binary, and another where $X_{(1)}$ is continuous while $X_{(2)}$ is binary. The results, shown in Table \ref{tab:res2}, are numerically very close to those of Table \ref{tab:resadd}. We also considered these additional setups with six covariates, but, to avoid repetition, we do not present the results as they are very close to those of Table \ref{tab:resk6}.}

\begin{table}[!htbp]
\caption{{\red {\bf Absolute} average bias and standard error (in brackets) of estimates: Six covariates}\label{tab:resk6}}
\centering
{\red
\begin{footnotesize}
\begin{tabular}{cccc@{~~~~}c@{~~}c@{~~}c@{~~~~}c@{~~}c@{~~}c@{~~~~}c@{~~}c@{~~}c}
\hline
&&&& \multicolumn{3}{l}{\hspace{0.4cm}$n=100$} &  \multicolumn{3}{l}{\hspace{0.4cm}$n=300$} &  \multicolumn{3}{l}{\hspace{0.4cm}$n=1000$} \\
Method & $\nu$ & $\pi_m$ & $\rho$ & $\theta_{(0)}$ & $\theta_{(1\text{-}2)}$ & $\theta_{(3\text{-}6)}$ & $\theta_{(0)}$ & $\theta_{(1\text{-}2)}$ & $\theta_{(3\text{-}6)}$ & $\theta_{(0)}$ & $\theta_{(1\text{-}2)}$ & $\theta_{(3\text{-}6)}$ \\[0.1cm]
\hline
&&&&&&&&&&&&\\[-0.3cm]
Our
 & 0 &  0.2  & 0.1 &   0.46 &   0.27 &   0.01 &   0.13 &   0.08 &   0.00 &   0.04 &   0.02 &   0.00 \\
proposal
 &   &       &     & (0.92) & (0.75) & (0.55) & (0.33) & (0.30) & (0.24) & (0.16) & (0.15) & (0.12) \\[0.05cm]
 &   &       & 0.2 &   0.37 &   0.24 &   0.01 &   0.21 &   0.12 &   0.01 &   0.05 &   0.03 &   0.00 \\
 &   &       &     & (1.00) & (0.85) & (0.67) & (0.56) & (0.46) & (0.33) & (0.22) & (0.20) & (0.16) \\[0.1cm]
 &   &  0.4  & 0.1 &   0.11 &   0.23 &   0.01 &   0.03 &   0.07 &   0.00 &   0.01 &   0.02 &   0.00 \\
 &   &       &     & (0.47) & (0.60) & (0.43) & (0.20) & (0.26) & (0.20) & (0.10) & (0.12) & (0.10) \\[0.05cm]
 &   &       & 0.2 &   0.05 &   0.27 &   0.01 &   0.05 &   0.13 &   0.01 &   0.01 &   0.04 &   0.00 \\
 &   &       &     & (0.62) & (0.84) & (0.59) & (0.32) & (0.43) & (0.30) & (0.15) & (0.20) & (0.15) \\[0.1cm]
 & 2 &  0.2  & 0.1 &   0.45 &   0.25 &   0.01 &   0.15 &   0.09 &   0.01 &   0.03 &   0.02 &   0.00 \\
 &   &       &     & (0.89) & (0.74) & (0.55) & (0.36) & (0.32) & (0.24) & (0.16) & (0.15) & (0.12) \\[0.05cm]
 &   &       & 0.2 &   0.24 &   0.14 &   0.02 &   0.23 &   0.14 &   0.01 &   0.08 &   0.05 &   0.00 \\
 &   &       &     & (0.97) & (0.81) & (0.67) & (0.65) & (0.51) & (0.37) & (0.30) & (0.25) & (0.18) \\[0.1cm]
 &   &  0.4  & 0.1 &   0.12 &   0.25 &   0.02 &   0.03 &   0.07 &   0.01 &   0.01 &   0.02 &   0.00 \\
 &   &       &     & (0.44) & (0.61) & (0.44) & (0.21) & (0.27) & (0.20) & (0.10) & (0.13) & (0.10) \\[0.05cm]
 &   &       & 0.2 &   0.05 &   0.24 &   0.02 &   0.04 &   0.11 &   0.01 &   0.02 &   0.04 &   0.00 \\
 &   &       &     & (0.61) & (0.85) & (0.62) & (0.38) & (0.47) & (0.33) & (0.20) & (0.24) & (0.17) \\[0.05cm]
\hline
&&&&&&&&&&&&\\[-0.3cm]
\multicolumn{13}{p{0.92\textwidth}}{\footnotesize $\theta_{(1\text{-}2)}$ pools the results for $\theta_{(1)}$ and $\theta_{(2)}$ which are numerically very close, and, similarly, $\theta_{(3\text{-}6)}$ pools the results for $\theta_{(3)}$, $\theta_{(4)}$, $\theta_{(5)}$, and $\theta_{(6)}$. {\bf Absolute} bias is shown here (which differs from the other tables) so that the pooling of results can be carried out.}
\end{tabular}
\end{footnotesize}
}
\end{table}
\begin{table}[!htbp]
\caption{{\red Average bias and standard error (in brackets) of estimates: Other covariate types}\label{tab:res2}}
\centering
{\red
\begin{footnotesize}
\begin{tabular}{cccc@{~~~~}c@{~~}c@{~~}c@{~~~~}c@{~~}c@{~~}c@{~~~~}c@{~~}c@{~~}c}
\hline
&&&& \multicolumn{3}{l}{\hspace{0.4cm}$n=100$} &  \multicolumn{3}{l}{\hspace{0.4cm}$n=300$} &  \multicolumn{3}{l}{\hspace{0.4cm}$n=1000$} \\
Covariates & $\nu$ & $\pi_m$ & $\rho$ & $\theta_{(0)}$ & $\theta_{(1)}$ & $\theta_{(2)}$ & $\theta_{(0)}$ & $\theta_{(1)}$ & $\theta_{(2)}$ & $\theta_{(0)}$ & $\theta_{(1)}$ & $\theta_{(2)}$ \\[0.1cm]
\hline
&&&&&&&&&&&&\\[-0.3cm]
Binary
 & 0 &  0.2  & 0.1 &  -0.20 &    0.10 &    0.10 &   -0.07 &    0.04 &    0.03 &   -0.02 &    0.01 &    0.01 \\
 &   &       &     & (0.61) &  (0.52) &  (0.50) &  (0.30) &  (0.26) &  (0.24) &  (0.15) &  (0.14) &  (0.13) \\[0.05cm]
 &   &       & 0.2 &  -0.13 &    0.08 &    0.10 &   -0.08 &    0.06 &    0.04 &   -0.02 &    0.01 &    0.01 \\
 &   &       &     & (0.68) &  (0.59) &  (0.58) &  (0.38) &  (0.34) &  (0.29) &  (0.19) &  (0.17) &  (0.15) \\[0.1cm]
 &   &  0.4  & 0.1 &  -0.04 &    0.08 &    0.07 &   -0.02 &    0.02 &    0.02 &    0.00 &    0.01 &    0.01 \\
 &   &       &     & (0.34) &  (0.40) &  (0.38) &  (0.18) &  (0.21) &  (0.19) &  (0.10) &  (0.11) &  (0.11) \\[0.05cm]
 &   &       & 0.2 &  -0.03 &    0.09 &    0.08 &   -0.01 &    0.04 &    0.04 &    0.00 &    0.02 &    0.01 \\
 &   &       &     & (0.45) &  (0.55) &  (0.52) &  (0.25) &  (0.31) &  (0.29) &  (0.13) &  (0.15) &  (0.15) \\[0.2cm]
Continuous
 & 0 &  0.2  & 0.1 &  -0.19 &    0.10 &    0.12 &   -0.06 &    0.04 &    0.04 &   -0.02 &    0.01 &    0.01 \\
 \& Binary
 &   &       &     & (0.64) &  (0.56) &  (0.55) &  (0.30) &  (0.29) &  (0.26) &  (0.14) &  (0.14) &  (0.13) \\[0.05cm]
 &   &       & 0.2 &  -0.19 &    0.13 &    0.12 &   -0.07 &    0.05 &    0.04 &   -0.02 &    0.01 &    0.01 \\
 &   &       &     & (0.80) &  (0.75) &  (0.64) &  (0.38) &  (0.37) &  (0.31) &  (0.18) &  (0.18) &  (0.15) \\[0.1cm]
 &   &  0.4  & 0.1 &  -0.04 &    0.11 &    0.07 &   -0.01 &    0.03 &    0.02 &    0.00 &    0.01 &    0.00 \\
 &   &       &     & (0.34) &  (0.46) &  (0.37) &  (0.18) &  (0.24) &  (0.19) &  (0.10) &  (0.12) &  (0.10) \\[0.05cm]
 &   &       & 0.2 &  -0.04 &    0.13 &    0.12 &   -0.02 &    0.06 &    0.05 &    0.00 &    0.02 &    0.01 \\
 &   &       &     & (0.51) &  (0.70) &  (0.58) &  (0.26) &  (0.37) &  (0.29) &  (0.13) &  (0.18) &  (0.14) \\[0.05cm]
\hline
\end{tabular}
\end{footnotesize}
}
\end{table}

Theorem \ref{prop_tcl} establishes the asymptotic normality of our proposed estimator. However, as mentioned in Section \ref{sec:normal} of the main paper, the asymptotic covariance can be difficult to estimate in general, and, therefore, we suggest using bootstrapping. Table \ref{tab:rescov} shows the empirical coverage for 95\% confidence intervals constructed using bootstrapping with 399 replicates for the simulated data; we find that the empirical coverage is close to the nominal level. {\red The same is true for the cases with six covariates and binary covariates, but we do not present them here for the sake of brevity.}

\begin{table}[htbp]
\caption{Empirical coverage of 95\% bootstrapped confidence intervals \label{tab:rescov}}
\centering
\begin{small}
\begin{tabular}{ccc@{~~~~}c@{~~}c@{~~}c@{~~~~}c@{~~}c@{~~}c@{~~~~}c@{~~}c@{~~}c}
\hline
&&& \multicolumn{3}{l}{\hspace{0.3cm}$n=100$} &  \multicolumn{3}{l}{\hspace{0.3cm}$n=300$} &  \multicolumn{3}{l}{\hspace{0.3cm}$n=1000$} \\
$\nu$ & $\pi_m$ & $\rho$ & $\theta_{(0)}$ & $\theta_{(1)}$ & $\theta_{(2)}$ & $\theta_{(0)}$ & $\theta_{(1)}$ & $\theta_{(2)}$ & $\theta_{(0)}$ & $\theta_{(1)}$ & $\theta_{(2)}$ \\[0.1cm]
\hline
&&&&&&&&&&&\\[-0.3cm]
0 &  0.2  & 0.1 & 91.2 & 93.2 & 93.9 & 93.5 & 93.2 & 94.3 & 94.0 & 94.8 & 94.0 \\
  &       & 0.2 & 94.7 & 95.0 & 94.8 & 94.8 & 94.5 & 93.2 & 93.8 & 93.0 & 93.5 \\[0.05cm]
  &  0.4  & 0.1 & 94.6 & 93.6 & 93.4 & 94.8 & 94.0 & 94.9 & 94.7 & 94.4 & 94.8 \\
  &       & 0.2 & 94.7 & 94.2 & 93.8 & 93.9 & 94.4 & 93.8 & 94.7 & 94.0 & 94.2 \\[0.1cm]
2 &  0.2  & 0.1 & 92.9 & 94.3 & 93.4 & 93.9 & 93.5 & 94.8 & 94.2 & 94.3 & 94.9 \\
  &       & 0.2 & 95.9 & 95.3 & 96.1 & 94.5 & 94.9 & 94.1 & 93.0 & 95.0 & 95.3 \\[0.05cm]
  &  0.4  & 0.1 & 93.3 & 93.0 & 92.4 & 93.9 & 94.6 & 94.6 & 95.0 & 93.8 & 95.0 \\
  &       & 0.2 & 94.7 & 95.3 & 94.1 & 94.0 & 94.4 & 95.3 & 94.2 & 94.5 & 94.9 \\[0.05cm]
\hline
\end{tabular}
\end{small}
\end{table}

\begin{table}[htbp]
\caption{Correct zeros, incorrect zeros, and model degrees of freedom\label{tab:resvaradd}}
\centering
\begin{small}
\smallskip
\begin{tabular}{cccc@{~~~~}c@{~~}c@{~~}c@{~~~~}c@{~~}c@{~~}c@{~~~~}c@{~~}c@{~~}c}
\hline
&&&& \multicolumn{3}{l}{\hspace{0.3cm}$n=100$} &  \multicolumn{3}{l}{\hspace{0.3cm}$n=300$} &  \multicolumn{3}{l}{\hspace{0.3cm}$n=1000$} \\
Type & $\nu$ & $\pi_m$ & $\rho$ & C & IC & DF & C & IC & DF & C & IC & DF \\[0.1cm]
\hline
&&&&&&&&&&&&\\[-0.3cm]
oracle &   &       &     & 4.00 & 0.00 & 3.00 & 4.00 & 0.00 & 3.00 & 4.00 & 0.00 & 3.00\\[0.1cm]
lasso  & 0 &  0.2  & 0.1 & 2.86 & 0.37 & 3.77 & 2.39 & 0.00 & 4.61 & 2.31 & 0.00 & 4.69 \\[0.0cm]
       &   &       & 0.2 & 3.23 & 0.79 & 2.98 & 2.66 & 0.06 & 4.29 & 2.54 & 0.00 & 4.46 \\[0.05cm]
       &   &  0.4  & 0.1 & 2.52 & 0.11 & 4.37 & 2.33 & 0.00 & 4.67 & 2.30 & 0.00 & 4.70 \\[0.0cm]
       &   &       & 0.2 & 2.90 & 0.45 & 3.65 & 2.51 & 0.02 & 4.47 & 2.44 & 0.00 & 4.56 \\[0.1cm]
       & 2 &  0.2  & 0.1 & 2.76 & 0.34 & 3.90 & 2.40 & 0.00 & 4.60 & 2.28 & 0.00 & 4.72 \\[0.0cm]
       &   &       & 0.2 & 3.19 & 0.85 & 2.96 & 2.64 & 0.09 & 4.27 & 2.46 & 0.00 & 4.53 \\[0.05cm]
       &   &  0.4  & 0.1 & 2.58 & 0.10 & 4.32 & 2.37 & 0.00 & 4.63 & 2.24 & 0.00 & 4.76 \\[0.0cm]
       &   &       & 0.2 & 2.92 & 0.48 & 3.60 & 2.50 & 0.05 & 4.45 & 2.43 & 0.00 & 4.57 \\[0.1cm]
alasso & 0 &  0.2  & 0.1 & 3.44 & 0.37 & 3.19 & 3.47 & 0.01 & 3.52 & 3.67 & 0.00 & 3.33 \\[0.0cm]
       &   &       & 0.2 & 3.50 & 0.74 & 2.76 & 3.50 & 0.09 & 3.41 & 3.69 & 0.00 & 3.31 \\[0.05cm]
       &   &  0.4  & 0.1 & 3.34 & 0.17 & 3.49 & 3.52 & 0.00 & 3.48 & 3.69 & 0.00 & 3.31 \\[0.0cm]
       &   &       & 0.2 & 3.40 & 0.47 & 3.13 & 3.45 & 0.05 & 3.50 & 3.67 & 0.00 & 3.32 \\[0.05cm]
       & 2 &  0.2  & 0.1 & 3.35 & 0.36 & 3.30 & 3.48 & 0.01 & 3.51 & 3.65 & 0.00 & 3.35 \\[0.0cm]
       &   &       & 0.2 & 3.47 & 0.77 & 2.76 & 3.46 & 0.13 & 3.42 & 3.65 & 0.00 & 3.35 \\[0.05cm]
       &   &  0.4  & 0.1 & 3.33 & 0.15 & 3.52 & 3.55 & 0.00 & 3.44 & 3.72 & 0.00 & 3.28 \\[0.0cm]
       &   &       & 0.2 & 3.38 & 0.47 & 3.15 & 3.42 & 0.07 & 3.50 & 3.64 & 0.01 & 3.35 \\[0.05cm]
\hline
\end{tabular}
\end{small}
\end{table}

Table \ref{tab:resvaradd} displays the results of the adaptive lasso variable selection procedure for all simulation scenarios, as well as results for the lasso procedure. The IC values tend towards zero as the sample size increases for both the lasso and alasso. However, the lasso tends to select a more complex model than the alasso as indicated by the smaller C values and larger DF values; it is well known that the lasso exhibits this behaviour which is why alasso is preferred \citep{fanlv:2010}. Overall, the alasso works well with C approaching the oracle value of four as the sample size increases. The results are unaffected by the value of $\nu$ as we might expect (since $S_{T_0}$ is unspecified in our approach), whereas, when $n=100$, increased censoring proportion, $\rho$, or decreased cure proportion, $\pi_m$, both lead to fewer variables being selected.

{\red Table \ref{tab:Bsadd} displays the standard errors of the estimates (in the case with two continuous covariates) when estimation proceeds based on (i) knowing the true $S_C$ and using $B_i^*(S_C)$, (ii) estimating $S_C$ and using $B_i^*(\widehat S_C)$, and (iii) directly observing the cure labels, $B_i$. The theory of \citet{hitomi_nishiyama_okui_2008} predicts that the variability will be sequentially lower over these three cases, and this is clearly the case from Table \ref{tab:Bsadd}. Most importantly, there is efficiency \emph{gain} rather than loss by estimating $S_C$.}

\begin{table}[htbp]
\caption{{\red Standard error of estimates using different $B$ types}\label{tab:Bsadd}}
\centering
{\red
\begin{footnotesize}
\begin{tabular}{cccc@{~~~~}c@{~~}c@{~~}c@{~~~~}c@{~~}c@{~~}c@{~~~~}c@{~~}c@{~~}c}
\hline
&&&& \multicolumn{3}{l}{\hspace{0.4cm}$n=100$} &  \multicolumn{3}{l}{\hspace{0.4cm}$n=300$} &  \multicolumn{3}{l}{\hspace{0.4cm}$n=1000$} \\
$\nu$ & $\pi_m$ & $\rho$ & $B$ Type & $\theta_{(0)}$ & $\theta_{(1)}$ & $\theta_{(2)}$ & $\theta_{(0)}$ & $\theta_{(1)}$ & $\theta_{(2)}$ & $\theta_{(0)}$ & $\theta_{(1)}$ & $\theta_{(2)}$ \\[0.1cm]
\hline
&&&&&&&&&&&&\\[-0.3cm]
 0 &  0.2  & 0.1 & $B^*(S_C)$      & (0.89) & (0.72) & (0.72) & (0.46) & (0.39) & (0.36) & (0.21) & (0.19) & (0.17) \\
   &       &     & $B^*(\widehat S_C)$ & (0.65) & (0.62) & (0.58) & (0.29) & (0.29) & (0.26) & (0.15) & (0.14) & (0.14) \\[0.03cm]
   &       &     & $B$             & (0.39) & (0.36) & (0.37) & (0.21) & (0.19) & (0.19) & (0.11) & (0.11) & (0.10) \\[0.08cm]
   &       & 0.2 & $B^*(S_C)$      & (1.06) & (0.88) & (0.92) & (0.74) & (0.60) & (0.58) & (0.38) & (0.31) & (0.28) \\
   &       &     & $B^*(\widehat S_C)$ & (0.80) & (0.74) & (0.68) & (0.40) & (0.40) & (0.34) & (0.20) & (0.20) & (0.17) \\[0.03cm]
   &       &     & $B$             & (0.38) & (0.37) & (0.36) & (0.21) & (0.20) & (0.19) & (0.11) & (0.10) & (0.10) \\[0.13cm]
   &  0.4  & 0.1 & $B^*(S_C)$      & (0.40) & (0.53) & (0.50) & (0.22) & (0.28) & (0.26) & (0.12) & (0.14) & (0.13) \\
   &       &     & $B^*(\widehat S_C)$ & (0.36) & (0.49) & (0.47) & (0.18) & (0.24) & (0.22) & (0.10) & (0.12) & (0.12) \\[0.03cm]
   &       &     & $B$             & (0.25) & (0.32) & (0.30) & (0.14) & (0.17) & (0.17) & (0.08) & (0.09) & (0.09) \\[0.08cm]
   &       & 0.2 & $B^*(S_C)$      & (0.68) & (0.84) & (0.85) & (0.39) & (0.50) & (0.44) & (0.19) & (0.24) & (0.21) \\
   &       &     & $B^*(\widehat S_C)$ & (0.51) & (0.74) & (0.66) & (0.29) & (0.37) & (0.32) & (0.13) & (0.18) & (0.16) \\[0.03cm]
   &       &     & $B$             & (0.26) & (0.31) & (0.30) & (0.14) & (0.17) & (0.17) & (0.08) & (0.09) & (0.09) \\[0.13cm]
 2 &  0.2  & 0.1 & $B^*(S_C)$      & (0.93) & (0.73) & (0.72) & (0.49) & (0.40) & (0.37) & (0.22) & (0.19) & (0.17) \\
   &       &     & $B^*(\widehat S_C)$ & (0.62) & (0.60) & (0.51) & (0.29) & (0.29) & (0.24) & (0.15) & (0.15) & (0.13) \\[0.03cm]
   &       &     & $B$             & (0.37) & (0.36) & (0.35) & (0.21) & (0.19) & (0.20) & (0.11) & (0.11) & (0.11) \\[0.08cm]
   &       & 0.2 & $B^*(S_C)$      & (1.12) & (0.86) & (0.87) & (0.73) & (0.59) & (0.53) & (0.41) & (0.33) & (0.28) \\
   &       &     & $B^*(\widehat S_C)$ & (0.80) & (0.76) & (0.60) & (0.45) & (0.43) & (0.32) & (0.21) & (0.20) & (0.15) \\[0.03cm]
   &       &     & $B$             & (0.38) & (0.35) & (0.36) & (0.21) & (0.20) & (0.20) & (0.11) & (0.10) & (0.11) \\[0.13cm]
   &  0.4  & 0.1 & $B^*(S_C)$      & (0.45) & (0.58) & (0.50) & (0.23) & (0.29) & (0.25) & (0.12) & (0.14) & (0.12) \\
   &       &     & $B^*(\widehat S_C)$ & (0.37) & (0.53) & (0.44) & (0.19) & (0.25) & (0.21) & (0.10) & (0.13) & (0.11) \\[0.03cm]
   &       &     & $B$             & (0.26) & (0.30) & (0.30) & (0.14) & (0.17) & (0.17) & (0.08) & (0.09) & (0.09) \\[0.08cm]
   &       & 0.2 & $B^*(S_C)$      & (0.73) & (0.91) & (0.83) & (0.38) & (0.47) & (0.36) & (0.20) & (0.25) & (0.18) \\
   &       &     & $B^*(\widehat S_C)$ & (0.55) & (0.77) & (0.60) & (0.30) & (0.41) & (0.27) & (0.15) & (0.20) & (0.13) \\[0.03cm]
   &       &     & $B$             & (0.26) & (0.29) & (0.30) & (0.15) & (0.17) & (0.17) & (0.08) & (0.09) & (0.09) \\[0.08cm]
\hline
\end{tabular}
\end{footnotesize}
}
\end{table}

\end{document}